\def\BibTeX{{\rm B\kern-.05em{\sc i\kern-.025em b}\kern-.08em
        T\kern-.1667em\lower.7ex\hbox{E}\kern-.125emX}}
\newtheorem{prop}{Proposition}
\begin{document}
\title{ٔDynamic Frame Structure for Next Generation Wireless Networks}
\author{Mohammad~R.~Abedi, Mohammad~R.~Javan, Nader~Mokari, and Eduard~.~A.~Jorswieck
	\thanks{Manuscript received June 19, 2015.
		
		Mohammad~R.~Abedi, Nader~Mokari are with ECE Department, Tarbiat Modares University, Tehran, Iran.
		
		Mohammad~R.~Javan is with the Department of Electrical and Robotics Engineering, Shahrood University of Technology, Shahrood, Iran.
		
		E~.~A.~Jorswieck is with the Department of Systems and Computer Engineering, Dresden University of Technology (TUD), Germany. }}

\maketitle

\begin{abstract}
In this paper, we devise a novel radio resource block (RB) structure named dynamic resource block structure (D-RBS) which can handle low latency traffics and large fluctuations in data rates by exploiting smart time and frequency duplexing. In our framework, the main resource block with a predefined bandwidth and time duration is divided into several small blocks with the same bandwidth and time duration. Depending on the service requirements, e.g., data rate and latency, the users are assigned to some these small blocks which could be noncontiguous both in frequency and time. This is in contrast to the previously introduced static resource block structure (S-RBS) where the size of each RB is predetermined and fixed. We provide resource allocation frameworks for this RB structure and formulate the optimization problems whose solutions are obtained by alternate search method (ASM) based on successive convex approximation approach (SCA). We provide a global optimal solution by exploiting the monotonic optimization method. By simulation we study the performance of our proposed scheme with S-RBS scheme and show it has 26\% gain compared to the S-RBS scheme.
\end{abstract}

\begin{IEEEkeywords}
	Static RB Structure, Dynamic RB Structure, Alternate Search Method,Successive Convex Approximation Approach, Monotonic Optimization.
\end{IEEEkeywords}

\section{Introduction}
\subsection{Motivations and State of the Art}
3GPP has categorized the network services into enhanced mobile broadband (eMBB),
 massive machine-type communications (mMTC), and
ultra-reliable and low-latency communications (URLLC) services \cite{3GPP}. eMBB services, like video streaming,
 require high data rate connections; mMTC services, like internet of things (IoT),
  require large number of devices to connect to the network which would only send small data payloads;
   and URLLC services, like tactile Internet (TI),  require communications of small payloads with low-latency and very high reliability.
    Due to heterogeneous requirements of these services, the efficient and dynamic resource allocation in the network is the main challenge. Traditional resource block structure (RBS) definition with fixed time-frequency structure, e.g., as in LTE with 1 millisecond transmission time
      interval (TTI) and 15 kHz bandwidth, is not able to support the network with these three types of services \cite{3GPP,3GPP-36-211}. The eMMB services require many frequency resources to satisfy their data rate requirements whose latency is not of importance, and hence, the time duration of the allocated RBs could be high, e.g., above one millisecond. On the other hand, URLLC services have small packets for transmission and require that these packets are sent in short time intervals, e.g., 0.1 to 0.5 milliseconds, with high reliability. In 4G, The admitted end-to-end latency of 4G networks is 30$\sim$100 milliseconds and the target reliability of information transmission is 0.99 which is not sufficient for ultra reliable services such as virtual reality and autonomous deriving, which is equal to 0.99999. In 5G, the RBs with different TTIs and bandwidths can satisfy latency and reliability requirements of services. In these networks, the notion of RB is defined as a time-frequency block with different time durations and bandwidths. In other words,
  the total time duration and bandwidth, e.g., $T$ milliseconds time duration and $W$ Hz bandwidth,
   is partitioned into several RBs with different time durations, e.g., $T$, $\frac{T}{2}$, and $\frac{T}{4}$, and different bandwidths,
    e.g., $W$, $\frac{W}{2}$, and $\frac{W}{4}$. Different RBs with different time durations and bandwidths can be used for
     different services based on the service requirements.

      In the aforementioned structure, although the size of the RBs is different, the size of each RB is predetermined and fixed. We call this structure by static RBS (S-RBS) as the size of the RBs is predefined and cannot be changed. However, this
RB structure, due to fixed TTI and bandwidth, does not
have enough flexibility to satisfy different service requirements
in an efficient way. Therefore, we propose a dynamic
       RBS (D-RBS) for 5G in which the total time duration and bandwidth is partitioned into several small RBs of
        equal time duration and bandwidth. Based on the service requirements, some small RBs are aggregated to form a composite RB\footnote{We call the aggregated small blocks which are allocated to a user by composite resource block} used by one specific service.
         In other words, the allocated RBs to each user can be constructed using one or more of these small RBs which could be spanned over different time and frequency bands. Dynamic here means that the size
          of the ultimate RBs is not predefined and could be different. The proposed structure enables dynamic allocation of RBs leading to
           efficient network resource utilization.
           
          Our proposed approach can also be easily extended to network slicing where each type of user belongs to one slice. Network slicing is a kind of virtual networking architecture. Its structure allows the creation of multiple virtual networks with different rate and latency requirements atop a shared physical infrastructure.


\subsection{Related Works}
The resource allocation problem among 5G network's services has been addressed in several works \cite{anand2017joint,pedersen2016flexible}.
 The authors in \cite{anand2017joint} study the downlink multiplexing of URLLC and eMBB services. The goal is to maximize the utility for eMBB traffic
  with respect to URLLC service requirement constraints. In \cite{pedersen2016flexible}, the authors study the downlink of 5G networks for URLLC services with the aim of achieving trade off between SE, latency, and reliability for each link and service flow. The dynamic scheduling for mMTC and URLLC services is studied in \cite{lien2017efficient} for
   NR where the authors investigate the performance of feedbackless
   and feedback based frameworks based on reinforcement learning. They illustrate that both schemes are able to
   effectively deploy both URLLC and mMTC services in NR. Recently, individual service designing has attracted a lot of attentions and there are several researches on RB design in
    wireless networks \cite{shafi20175g,dawy2017toward,popovski2018wireless,zhang2017network,anand2017joint}. An overview of 5G deployment challenges are provided in \cite{shafi20175g}. In \cite{dawy2017toward}, the authors provide a survey of key techniques to overcome the new requirements and challenges of mMTC applications. The principles of using diversity sources, design of packets, and access protocols to support URLLC are investigated in \cite{popovski2018wireless}. In \cite{zhang2017network}, the authors provide joint power and subchannel allocation for sliced 5G network with respect to both the inter-tier and intra-tier interference constraints. In \cite{mahmood2016radio}, the authors
     study orthogonal resource allocation for mMTC and eMBB. The uplink multiplexing of URLLC, eMBB, and mMTC services is studied in \cite{tian2017uplink}.
 In \cite{she2018cross}, the authors investigate downlink transmission design for URLLC services. Joint resource allocation in uplink and downlink based
  on effective bandwidth and effective capacity to ensure the quality of service (QoS) for URLLC is considered in \cite{aijaz2016towards,she2017radio}. Joint uplink and
   downlink bandwidth optimization with respect to delay constraints to guarantee both packet loss constraint and end-to-end delay requirement is considered in
    \cite{she2017radio}. In \cite{she2018joint}, the authors propose a packet delivery mechanism for URLLC. The goal is to reduce the bandwidth required to guaranteeing queuing delay based on statistical multiplexing queuing model. The network slicing based on orthogonal and non-orthogonal
      radio resource allocation for the three types of services of 5G is considered in \cite{popovski20185g}. In \cite{matthiesen2018throughput}, the authors optimize power and bandwidth
      allocation across radio access network slices and users which have heterogeneous QoS requirements. The goal is to maximize both throughput and energy
      efficiency in the sliced radio access network. In \cite{wu2017signal,lien2017efficient},
       the authors investigate the dynamic downlink
resource allocation for eMBB and URLLC services on the same
time/frequency resources. The impact of 5G frame structure on URLLC performance is investigated in
\cite{iwabuchi20175g}. The performance of flexible TTI to adopt
traffic load is
 investigated in \cite{liao2016resource,fountoulakis2017examination}. A 5G frame structure
designing is considered in \cite{pedersen2016flexible} to support user's service requirements. In all the previous works, the authors have considered
 a static structure for RBs that does not have enough flexibility to respond to different service requirements. Indeed, they do not consider both outage and bit error rate constraints that are critical for new emerging services in 5G. In fact, since the traditional framing, resource allocation, and user association schemes are not  flexible enough and do not consider various service types requirements in their optimization problems, they are not appropriate for 5G services.

\subsection{Main Contributions}
In this paper, we propose and design a dynamic framing scheme for supporting different types of
 services in the network, i.e., D-RBS, and investigate its performance. In contrast to the S-RBS scheme, in the proposed scheme,
  composite RBs are flexible in that the constructing small blocks could be of different time durations in different frequency bands, and the number of allocated small blocks us determined by the adopted resource allocation algorithm.

   In D-RBS, it is assumed that each cell partitions the total time duration and bandwidth in its own way which is determined by the resource allocation framework. The resulting partitioned resource blocks would be different for different cells. More precisely, we devise and develop a new resource management framework for wireless
     networks with different types of services each with different requirements where the time-frequency resource of the network is
     partitioned into several composite RBs each having different number of small RBs To this end, we formulate our proposed scheme as an optimization problem whose outcome is joint dynamic frame structure design, resource allocation, and user
       association (matching of users to RBs). Our aim is to maximize the network throughput under transmit power constraints and  QoS requirements
        of eMBB, mMTC, and URLLC services. Our contributions are as follows:
\begin{itemize}
    \item We design a new dynamic frame resource management scheme for next generations of wireless networks where different services have different QoS requirements which in turn requires different time-bandwidth resources. In the proposed scheme,  the resource management algorithm provides joint RB partitioning, resource allocation, and user association.
    \item We formulate our proposed resource management scheme as optimization problems which are mixed integer nonlinear nonconvex optimization problems in general. The alternate search method (ASM) is  used to decompose the main optimization problem into to optimization sub-problem. In the first optimization sub-problem, given RB assignment, we optimize the  power allocation. In the second optimization sub-problem, given the power allocation results of the first sub-problem, we optimize
    the RB association. To solve the resulting non-convex sub-problems at each level, we exploit the successive convex approximation (SCA) method to write the main non-convex problem into a series of convex problems which could be solved using standard tools like CVX.
    \item We study the convergence of our proposed scheme and show that the algorithms converge to a sub-optimal solution. We further investigate our proposed scheme from the computational complexity  prespective.
        \item We provide a global optimization solution by means of the monotonic optimization method.
    \item We study the performance of the proposed scheme using simulations for different network parameters. We show the superiority of our scheme and show that the performance of D-RBS is better than that of the S-RBS scheme due to its dynamic nature which is suitable for highly dynamic environment of wireless networks.  
\end{itemize}
The remainder of this paper is organized as follows. System model and descriptions regarding 5G services and requirements are presented in Section \ref{SystemModelandDescription}. Problem formulation and solution algorithms are provided in Section \ref{ProblemFormulationandSolution}. In Section \ref{monotonic}, we provide global optimization solution by monotonic optimization method. In Section \ref{ComputationalComplexity andConvergence Analysis}, we provide the convergence proof and the computational complexity of our scheme. Simulation results
are provided in Section \ref{SimulationResults}. And finally, Section \ref{conclusions} concludes this work.

\section{System Model and Description}\label{SystemModelandDescription}
\subsection{System Model}
We consider a multi-cell downlink of an OFDMA
network. There is one BS at each cell. The BS set is denoted by $\mathcal{B}=\{1,2,\dots,B\}$ with $|\mathcal{B}|=B$  where $|.|$ denotes the number of elements in a set. There are three types of users which request different types of services. In BS $b$, the sets of users which request eMBB, mMTC, and URLLC services are demoted by $\mathcal{K}^{\text{e}}_b=\{1,2,\dots,K^{\text{e}}_b\}$, $\mathcal{K}^{\text{m}}_b=\{1,2,\dots,K^{\text{m}}_b\}$ and $\mathcal{K}^{\text{u}}_b=\{1,2,\dots,K^{\text{u}}_b\}$, respectively, with $|\mathcal{K}^{\text{e}}_b|=K^{\text{e}}_b$, $|\mathcal{K}^{\text{m}}_b|=K^{\text{m}}_b$ and $|\mathcal{K}^{\text{u}}_b|=K^{\text{u}}_b$. The set of total user in BS $b$ is denoted by   $\mathcal{K}_b=\mathcal{K}^{\text{e}}_b\cup \mathcal{K}^{\text{m}}_b \cup \mathcal{K}^{\text{u}}_b$ and $K_b=|\mathcal{K}_b|$ denotes the total number of users in BS $b$. The BSs and users are equipped with one antenna. A time-frequency resource of $T$ seconds and $W$ Hz is used by each cell. In the proposed frame structure, we assume that the time-frequency resource is divided into several small RBs with the time duration  $\chi $ and frequency size $\vartheta$. All the small RBs could be shown by a matrix $[A]_{F\times N}$ with $N=\frac{T}{\chi }$ and $F=\frac{W}{\vartheta}$. In contrast to the S-RBS scheme, in which the size of RBs are constant (Fig. \ref{Frame-structure-2}), in the proposed scheme, at each cell, users are flexibly multiplexed over these small RBs. To each user several small RBs may be assigned which could be in different times and frequency bands and be noncontiguous in time and frequency. This RB structure is shown in Fig. \ref{Frame-structure-2}.


With carrier and slot aggregation capability, already in use \cite{3GPP-5G}, we can aggregate radio carriers (in the same band or across disparate bands) and slots of different small blocks to construct composite RBs to meet the requirements of users. Therefore, for user $k$ in cell $b$, the assigned composite RB is constructed by aggregation of one or more small blocks of time duration  $\chi $ and frequency size $\vartheta$. Then, the users are multiplexed in an orthogonal fashion to the composite RBs.
\begin{figure}[h]
    \begin{center}
        \includegraphics[width=6.5 in]{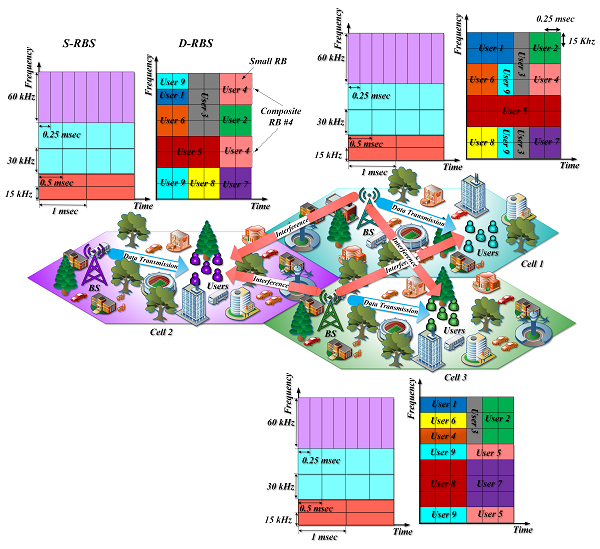} 
        \caption{The S-RBS scheme: there are nine RBs for each cell. The data transmission and interference vectors are illustrated by blue and red color. The D-RBS scheme: there are nine composite RBs for each cell. The data transmission and interference vectors are illustrated by blue and red color.}
        \label{Frame-structure-2}
    \end{center}
\end{figure}

\subsection{The eMBB Service}
The eMBB service requires high bandwidth and the reliability, given by packet error rate (PER), about $10^{-3}$ \cite{3GPP}. The instantaneous transmission rate between the $b^{\text{th}}$ BS and the $k^{\text{th}}$ user on RB $(f,n)$ is defined as
\begin{equation}
R^{fn}_{bk}(\textbf{p},\textbf{s},\textbf{g})=\chi \vartheta\log_2\left(1+\frac{p^{fn}_{bk}g^{fn}_{bk}}
{I^{fn}_{bk}(\textbf{p},\textbf{s},\textbf{g})+\vartheta N_0}\right),
\end{equation}
where $I^{fn}_{bk}(\textbf{p},\textbf{s},\textbf{g})=\sum_{i\in\mathcal{B}\setminus\{b\}}
\sum_{j\in\mathcal{K}_{i}\setminus\{k\}}s^{fn}_{ij}p^{fn}_{ij}g^{fn}_{ik}$ is the interference to user $k$ in cell $b$ on RB $(f,n)$, $p^{fn}_{bk}$ and $g^{fn}_{bk}$ are the transmit power and channel power gain of the $b^{\text{th}}$ BS to the $k^{\text{th}}$ user on RB $(f,n)$, respectively, while $N_0$ is the single-sided noise power-spectral-density (PSD). $\textbf{g}=[g^{11}_{11},\dots,g^{fn}_{bk},\dots,g^{FN}_{BK}]^T$, $\textbf{p}=\{p^{11}_{11},\dots,p^{fn}_{bk},\dots,p^{FN}_{BK}\}$, and  $\textbf{s}=\{s^{11}_{11},\dots,s^{fn}_{bk},\dots,s^{FN}_{BK}\}$ denote the channel gain, the power allocation and small RBs assignment vectors.

The binary-valued RB-association factor $s^{fn}_{bk}$ represents
both RB and BS assignment indicator for
user $k$ of BS $b$ on RB $(f,n)$, i.e., $s^{fn}_{bk}=1$ when BS $b$ allocates RB $(f,n)$ to user $k$, and $s^{fn}_{bk}=0$, otherwise. For user $k$ which requests eMBB service, the following constraint should be satisfied to ensure that the
data rate of user $k$ is equal or above the required minimum data rate:
\begin{equation}\label{eq-min-R-e}
\sum_{f\in\mathcal{F}}\sum_{n\in\mathcal{N}}s^{fn}_{bk}R^{fn}_{bk}(\textbf{p},\textbf{s},\textbf{g})\geq R^{\text{min,e}}_{bk}, \forall b\in\mathcal{B}, k\in\mathcal{K}^{\text{e}}_b,
\end{equation}
where $R^{\text{min,e}}_{bk}$ denotes the required data rate of eMBB user $k$ at BS $b$.

\subsection{The URLLC Service}
The URLLC service requires low-latency and very high reliability transmission with packet loss probability lower than $10^{-7}$ \cite{3GPP}. The achievable rate of user $k$ in RB $(f,n)$ with finite block length can be accurately approximated as follows \cite{durisi2016toward,yang2014quasi}:
\begin{equation}\label{eq-R_uRLLC}
R^{\text{URLLC},fn}_{bk}(\varepsilon^{fn}_{bk})\approx\frac{\chi \vartheta}{\ln 2} \left[\ln\left(1+\gamma^{fn}_{bk}(\textbf{p},\textbf{s},\textbf{g})\right)-\sqrt{\frac{\nu^{fn}_{bk}}{\vartheta}}f^{-1}_Q(\varepsilon^{fn}_{bk})\right] ,
\end{equation}
where $\gamma^{fn}_{bk}(\textbf{p},\textbf{s},\textbf{g})=\frac{p^{fn}_{bk}g^{fn}_{bk}}
{\sum_{i\in\mathcal{B}\setminus\{b\}}
    \sum_{j\in\mathcal{K}_{i}\setminus\{k\}}s^{fn}_{ij}p^{fn}_{ij}g^{fn}_{ik}+\vartheta N_0}$, $f^{-1}_Q(.)$  is the inverse of Gaussian-Q function, $\nu^{fn}_{bk}=1-\frac{1}{(1+\gamma^{fn}_{bk})^2}$ and $\varepsilon^{fn}_{bk}$ denotes the decoding error probability. The number of symbols in the block is $\psi=\chi  \vartheta$. When transmitting $\kappa$ bits from BS $b$ to user $k$ in the short blocklength regime, by setting $\chi  R^{\text{URLLC},fn}_{bk}=\kappa$, the decoding error probability can be obtained from (\ref{eq-R_uRLLC}) as follows:
\begin{equation}\label{eq-R_uRLLC-2}
\varepsilon^{fn}_{bk}\approx \mathbb{E}_{\textbf{g}}\left\lbrace f_Q\left(\sqrt{\frac{\vartheta}{\nu^{fn}_{bk}}} \left[ \ln\left(1+\gamma^{fn}_{bk}(\textbf{p},\textbf{s},\textbf{g}) \right)-\frac{\kappa \ln 2}{\chi \vartheta} \right] \right) \right\rbrace.
\end{equation}
where $\mathbb{E}(x)$ denotes the expected value of $x$. Unfortunately, a closed-form expression for Q-function does not exist. Hence, we utilize an approximation of $f_Q\left( \frac{\log_2\left( 1+\gamma^{fn}_{bk}(\textbf{p},\textbf{s},\textbf{g})\right) -\hat{\varphi}}{\sqrt{\nu^{fn}_{bk}(\gamma^{fn}_{bk}(\textbf{p},\textbf{s},\textbf{g}))(\log_2 e)^2/\psi}}\right) \approx \Gamma(\gamma^{fn}_{bk}(\textbf{p},\textbf{s},\textbf{g}))$ to approximate the decoding error probability \cite{makki2014finite}, where $\hat{\varphi}=\kappa/\psi$ is the number of bits in each symbol,
\begin{equation}
\Gamma(\gamma^{fn}_{bk}(\textbf{p},\textbf{s},\textbf{g}))=
\begin{cases}
1 & \quad \gamma^{fn}_{bk}\leq \eta_1,\\
1/2-\varpi\sqrt{\psi}(\gamma^{fn}_{bk}(\textbf{p},\textbf{s},\textbf{g})-\theta) &\quad \eta_1 \leq \gamma^{fn}_{bk} \leq \eta_2,\\
0 &\quad \gamma^{fn}_{bk}\geq \eta_2,
\end{cases}
\end{equation}
where $\varpi=\frac{1}{2\pi\sqrt{2^{2\hat{\varphi}}-1}}$, $\theta=2^{\hat{\varphi}}-1$, $\eta_1=\theta-\frac{1}{2\varpi \sqrt{\psi}}$, and $\eta_2=\theta+\frac{1}{2\varpi \sqrt{\psi}}$. 

The following constraint should be satisfied to ensure that the packet loss probability of user $k$ in RB $(f,n)$ is equal or below the threshold value $\varepsilon^{\text{max},fn}_{bk}$:
\begin{equation}\label{eq-packet-loss}
\mathbb{E}_{\textbf{g}}\left\lbrace \Gamma(\gamma^{fn}_{bk}(\textbf{p},\textbf{s},\textbf{g})\right\rbrace\leq \varepsilon^{\text{max},fn}_{bk}, \forall f\in\mathcal{F}, n\in\mathcal{N}, b\in\mathcal{B}, k\in\mathcal{K}^{\text{u}}_b.
\end{equation}

Due to the delay limitation of URLLC users, the following constraint is used:
\begin{equation}\label{eq-r21}
\left( \sum_{f\in\mathcal{F}}s^{fn}_{bk}\right) \left(\sum_{f\in\mathcal{F}}\sum_{\acute{n}\in\mathcal{N},\acute{n}\neq n}s^{f\acute{n}}_{bk} \right) = 0,\forall n\in\mathcal{N}, k\in\mathcal{K}^{\text{u}}_b, b\in\mathcal{B}.
\end{equation}
which means that each URLLC user is assigned to only one RBs column (i.e., one column of small RBs over the entire spectrum and one TTI). More specifically, (\ref{eq-r21}) ensures when any RBs in one RB column is assigned to one URLLC user, that URLLC user would not be assigned any RBs in other RB columns.

\subsection{The mMTC Service}
The mMTC users require fixed, typically low, transmission rate and PER on the order of $10^{-1}$ \cite{3GPP}.  To guarantee
these requirements, the following constraints should be applied:
\begin{equation}\label{eq-min-R-m}
\sum_{f\in\mathcal{F}}\sum_{n\in\mathcal{N}}s^{fn}_{bk}R^{fn}_{bk}(\textbf{p},\textbf{s},\textbf{g})\geq R^{\text{min,m}}_{bk}, \forall b\in\mathcal{B}, k\in\mathcal{K}^{\text{m}}_b,
\end{equation}
and the following constraint should be satisfied to ensure that the packet loss probability of user $k$ in RB $(f,n)$ is equal or below the threshold values $\tilde{\varepsilon}^{\text{max},fn}_{bk}$:
\begin{equation}\label{eq-packet-loss-2}
\mathbb{E}_{\textbf{g}}\left\lbrace \Gamma(\gamma^{fn}_{bk}(\textbf{p},\textbf{s},\textbf{g})\right\rbrace\leq \tilde{\varepsilon}^{\text{max},fn}_{bk}, \forall f\in\mathcal{F}, n\in\mathcal{N}, b\in\mathcal{B}, k\in\mathcal{K}^{\text{m}}_b.
\end{equation}

\section{Problem Formulation and Solution}\label{ProblemFormulationandSolution}
We aim to maximize the total network data rate of D-RBS scheme which can be formulated as follows:

\emph{\textbf{Problem}} $\mathcal{P}^{\text{D-RBS}}$
\begin{subequations}\label{eq-main}
    \begin{align}
    \max_{\textbf{p},\textbf{s}}~~&\sum_{f\in\mathcal{F}}\sum_{n\in\mathcal{N}}\sum_{b\in\mathcal{B}}\sum_{k\in\mathcal{K}}s^{fn}_{bk}R^{fn}_{bk}(\textbf{p},\textbf{s},\textbf{g}),
    \\\label{eqq-1}\text{s.t.}~~&\sum_{f\in\mathcal{F}}\sum_{n\in\mathcal{N}}\sum_{k\in\mathcal{K}_b}s^{fn}_{bk}p^{fn}_{bk}\leq P^{\text{max}}_b, \forall b\in\mathcal{B},\\\label{eqq-2}&
    \sum_{k\in\mathcal{K}_b}s^{fn}_{bk}\leq 1,\forall f\in\mathcal{F}, n\in\mathcal{N}, b\in\mathcal{B},
    \\\label{eqq-3}& p^{fn}_{bk}\geq 0, s^{fn}_{bk}\in\{0,1\}, \forall f\in\mathcal{F}, n\in\mathcal{N}, b\in\mathcal{B}, k\in\mathcal{K},\\\nonumber&
    (\ref{eq-min-R-e}), (\ref{eq-packet-loss})
    -(\ref{eq-packet-loss-2}).
    \end{align}
\end{subequations}
(\ref{eqq-1}) and (\ref{eqq-2}) denote the transmit power constraint of each BS and the OFDMA exclusive small RB allocation
in each BS $b$, respectively\footnote{Note the dependency of the problem on the channel realization. This means that the problem should be solved for each channel realization.}. Due to the delay limitation of URLLC users, (\ref{eq-r21}) is used to restrict the access of each URLLC user.

Note that the optimization problem (\ref{eq-main}) is non-convex mixed-integer which makes it hard to develop an efficient algorithm to solve it globally. Note the dependency of the problem on the channel realization. This means that the problem should be solved for each channel realization.

By exploiting ASM, we propose a two-step iterative algorithm to optimize power allocation and small RB assignment in each BS. In the first step, given RB assignment variables, the optimization problem is solved to find power allocation vector. In the second step, given power allocation results, we obtain the RB assignment. The iteration of algorithm can be stopped when the difference of the objective function values in two consecutive iterations is small enough.

However, both the power allocation and small RB assignment problems are non-convex. We use complementary geometric programming (CGP) \cite{boyd2004convex,shen2004global} to solve the corresponding optimization problem. The proposed algorithm to obtain the power allocation and small RBs assignment is shown in Table \ref{Iterative Optimization Algorithm}.

\begin{table}[h!]\caption{An Iterative Algorithm in Two Steps to Obtain the Power Allocation and small RBs Assignment}\label{Iterative Optimization Algorithm}
    \centering
    \begin{tabular}{p{15cm}}
        \toprule
        \textbf{Algorithm I}: Iterative Joint Power Allocation and small RBs Assignment Algorithm\\
        \midrule
        \textbf{Initialization}: Select a starting point $\textbf{s}(0)$, and set iteration number
        $\varrho=0$;\\
        \textbf{Repeat}\\~~~
        \textbf{Step1: Power Allocation}\\~~~
        \textbf{Initialization}: Set  $\textbf{s}(\varrho_1)=\textbf{s}(\varrho)$, $\textbf{p}(\varrho_1)=\textbf{p}(\varrho)$ and $\varrho_1=0$;\\~~~
        \textbf{Repeat}\\~~~~~~
        \textbf{Step1.1}: Update $\mu^{fn}_{bk}(\varrho_1)$, and $\mu_0(\varrho_1)$ using (\ref{eq_mu_f}) and (\ref{eq_mu_0});\\~~~~~~
        \textbf{Step1.2}: Solve (\ref{eq_power}) to find optimal power allocation  $\mathbf{p}(\varrho_1)$ (Convex programming via CVX);\\~~~~~~
        \textbf{Step1.3}: $\varrho_1=\varrho_1+1$;\\~~~
        \textbf{Until}: $\|\sum_{f\in\mathcal{F}}\sum_{n\in\mathcal{N}}\sum_{b\in\mathcal{B}}\sum_{k\in\mathcal{K}}s^{fn}_{bk}R^{fn}_{bk}(\textbf{p}(\varrho_1),\textbf{s}(\varrho),\textbf{g})-\sum_{f\in\mathcal{F}}\sum_{n\in\mathcal{N}}\sum_{b\in\mathcal{B}}\sum_{k\in\mathcal{K}}s^{fn}_{bk}R^{fn}_{bk}(\textbf{p}(\varrho_1-1),\textbf{s}(\varrho),\textbf{g})\|\leq {\epsilon}_1$;\\~~~
        \textbf{Return}: $\textbf{p}(\varrho)=\textbf{p}(\varrho_1)$\\~~~
        \textbf{Step2: RB Assignment}\\~~~
        \textbf{Initialization}: Set  $\textbf{p}(\varrho_2)=\textbf{p}(\varrho)$, and $\varrho_2=0$;\\~~~
        \textbf{Repeat}\\~~~~~~
        \textbf{Step2.1}: Update $\phi^n_{bk}(\varrho_2)$, $\xi^n_{bk}(\varrho_2)$, $\nu^{fn}_{bk}(\varrho_2)$, $\delta^{fn}_{bk}(\varrho_2)$, $\varphi^{fn}_{bk}(\varrho_2)$, $d_0(\varrho_2)$, and $d^{fn}_{bk}(\varrho_2)$, using (\ref{eq_phi}) and (\ref{eq-varphi})-(\ref{eq-d_f});\\~~~~~~
        \textbf{Step2.2}: For the obtained $\mathbf{p}(\varrho)$,
        solve (\ref{eq-main-RBA}) to find $\textbf{s}(\varrho+1)$ (Convex programming via CVX);\\~~~~~~
        \textbf{Step2.3}: $\varrho_2=\varrho_2+1$;\\~~~
        \textbf{Until} $\|\sum_{f\in\mathcal{F}}\sum_{n\in\mathcal{N}}\sum_{b\in\mathcal{B}}\sum_{k\in\mathcal{K}}s^{fn}_{bk}(\varrho_2)R^{fn}_{bk}(\textbf{p}(\varrho),\textbf{s}(\varrho_2),\textbf{g})$\\~~~~~~~~~~~$-\sum_{f\in\mathcal{F}}\sum_{n\in\mathcal{N}}\sum_{b\in\mathcal{B}}\sum_{k\in\mathcal{K}}s^{fn}_{bk}(\varrho_2-1)R^{fn}_{bk}(\textbf{p}(\varrho),\textbf{s}(\varrho_2-1),\textbf{g})\|\leq {\epsilon}_2$;\\~~~
        \textbf{Return}: $\textbf{s}(\varrho)=\textbf{s}(\varrho_2)$\\~~~
        \textbf{Step3:} $\varrho=\varrho+1$;\\
        \textbf{Until}: $\|\sum_{f\in\mathcal{F}}\sum_{n\in\mathcal{N}}\sum_{b\in\mathcal{B}}\sum_{k\in\mathcal{K}}s^{fn}_{bk}R^{fn}_{bk}(\textbf{p}(\varrho),\textbf{s}(\varrho),\textbf{g})-\sum_{f\in\mathcal{F}}\sum_{n\in\mathcal{N}}\sum_{b\in\mathcal{B}}\sum_{k\in\mathcal{K}}s^{fn}_{bk}R^{fn}_{bk}(\textbf{p}(\varrho-1),\textbf{s}(\varrho),\textbf{g})\|\leq {\epsilon}_1$ and\\~~~~~~~ $\|\sum_{f\in\mathcal{F}}\sum_{n\in\mathcal{N}}\sum_{b\in\mathcal{B}}\sum_{k\in\mathcal{K}}s^{fn}_{bk}(\varrho)R^{fn}_{bk}(\textbf{p}(\varrho),\textbf{s}(\varrho),\textbf{g})-\sum_{f\in\mathcal{F}}\sum_{n\in\mathcal{N}}\sum_{b\in\mathcal{B}}\sum_{k\in\mathcal{K}}s^{fn}_{bk}(\varrho-1)R^{fn}_{bk}(\textbf{p}(\varrho),\textbf{s}(\varrho-1),\textbf{g})\|\leq {\epsilon}_2$;\\
        \textbf{Return}:
        $(\mathbf{p}^*,\textbf{s}^*)=(\mathbf{p}(\varrho),\textbf{s}(\varrho))$.\\
        \bottomrule
        \label{Alternate Optimization Algorithm}
    \end{tabular}
    \vspace{-0.5cm}
\end{table}

\subsection{Power Allocation Sub-Problem}
Given $\textbf{s}$, the following power allocation optimization sub-problem to maximize the total rate should be solved:

\emph{\textbf{Problem}} $\mathcal{P}^{\text{D-RBS}}_{\text{PA}}$
\begin{subequations}\label{eq-main-PA}
    \begin{align}
    \max_{\textbf{p}(\varrho_1)}~~&\sum_{f\in\mathcal{F}}\sum_{n\in\mathcal{N}}\sum_{b\in\mathcal{B}}\sum_{k\in\mathcal{K}}s^{fn}_{bk}(\varrho)R^{fn}_{bk}(\textbf{p}(\varrho_1),\textbf{s}(\varrho),\textbf{g}),
    \\\label{eq-main-PA-1}\text{s.t.}~~&\sum_{f\in\mathcal{F}}\sum_{n\in\mathcal{N}}\sum_{k\in\mathcal{K}_b}s^{fn}_{bk}(\varrho)p^{fn}_{bk}(\varrho_1)\leq P^{\text{max}}_b, \forall b\in\mathcal{B},\\\label{eq-main-PA-6}& p^{fn}_{bk}(\varrho_1)\geq 0, \forall f\in\mathcal{F}, n\in\mathcal{N}, b\in\mathcal{B}, k\in\mathcal{K}.
    \\\nonumber&
    (\ref{eq-min-R-e}), (\ref{eq-packet-loss}), (\ref{eq-min-R-m}),(\ref{eq-packet-loss-2}).
    \end{align}
\end{subequations}

Optimization problem (\ref{eq-main-PA}) is non-convex, due to the interference term in the objective function and constraints (\ref{eq-min-R-e}), (\ref{eq-packet-loss}), (\ref{eq-min-R-m}), and (\ref{eq-packet-loss-2}). We convert (\ref{eq-main-PA}) into the GP
optimization problem. In this regards, we rewrite the objective of (\ref{eq-main-PA}) as:
\begin{align}
\max_{\textbf{p}(\varrho_1)}~~&\prod_{f\in\mathcal{F},n\in\mathcal{N},b\in\mathcal{B},k\in\mathcal{K}}\Gamma^{fn}_{bk}(\textbf{p}(\varrho_1),\textbf{s}(\varrho),\textbf{g}),
\end{align}
where $\Gamma^{fn}_{bk}(\textbf{p}(\varrho_1),\textbf{s}(\varrho),\textbf{g})=\frac{\vartheta N_0+I^{fn}_{bk}(\textbf{p}(\varrho_1),\textbf{s}(\varrho),\textbf{g})+p^{fn}_{bk}(\varrho_1)g^{fn}_{bk}}{\vartheta N_0+I^{fn}_{bk}(\textbf{p}(\varrho_1),\textbf{s}(\varrho),\textbf{g})}$. Arithmetic-geometric mean approximation (AGMA) method is exploited for posynomial form approximation \cite{shen2004global}. Therefore, $[\Gamma^{fn}_{bk}(\textbf{p}(\varrho_1),\textbf{s}(\varrho),\textbf{g})]^{-1}$ can be
approximated as \cite{shen2004global}
\begin{align}
&\hat{\Gamma}^{fn}_{bk}(\textbf{p}(\varrho_1),\textbf{s}(\varrho),\textbf{g})\\\nonumber&=\left(\vartheta N_0+I^{fn}_{bk}(\textbf{p}(\varrho_1),\textbf{s}(\varrho),\textbf{g})\right)\left(\frac{\vartheta N_0}{\mu_0(\varrho_1)}\right)^{-\mu_0(\varrho_1)}\prod_{f\in\mathcal{F},n\in\mathcal{N},b\in\mathcal{B},k\in\mathcal{K}}\left(\frac{p^{fn}_{bk}(\varrho_1)g^{fn}_{bk}}{\mu^{fn}_{bk}(\varrho_1)}\right)^{-\mu^{fn}_{bk}(\varrho_1)},
\end{align}
where
\begin{equation}\label{eq_mu_f}
\mu^{fn}_{bk}(\varrho_1)=\frac{p^{fn}_{bk}(\varrho_1-1)g^{fn}_{bk}}{\vartheta N_0+\sum_{f\in\mathcal{F}}
    \sum_{n\in\mathcal{N}}\sum_{b\in\mathcal{B}}
    \sum_{k\in\mathcal{K}_{b}}p^{fn}_{bk}(\varrho_1-1)g^{fn}_{bk}},
\end{equation}
\begin{equation}\label{eq_mu_0}
\mu_0(\varrho_1)=\frac{\vartheta N_0}{\vartheta N_0+\sum_{f\in\mathcal{F}}
    \sum_{n\in\mathcal{N}}\sum_{b\in\mathcal{B}}
    \sum_{k\in\mathcal{K}_{b}}p^{fn}_{bk}(\varrho_1-1)g^{fn}_{bk}}.
\end{equation}

Furthermore, we provide convex approximations for non-convex constraints (\ref{eq-packet-loss}) and (\ref{eq-packet-loss-2}). To this end, we use the first order Taylor series approximation of $\gamma^{fn}_{bk}(\textbf{p}(\varrho_1),\textbf{s}(\varrho),\textbf{g})$ around $\textbf{p}(\varrho_1-1)$, i.e., $\tilde{\gamma}^{fn}_{bk}(\textbf{p}(\varrho_1))$, as follows:
\begin{align}\label{eq-00121}
\tilde{\gamma}^{fn}_{bk}(\textbf{p}(\varrho_1),\textbf{s}(\varrho),\textbf{g}) \approx \gamma^{fn}_{bk}(\textbf{p}(\varrho_1),\textbf{s}(\varrho),\textbf{g})+\nabla\gamma^{fn}_{bk}(\textbf{p}(\varrho_1-1),\textbf{s}(\varrho),\textbf{g})\big(\textbf{p}(\varrho_1)-\textbf{p}(\varrho_1-1)\big),
\end{align}
where the gradient $\nabla \gamma^{fn}_{bk}(\textbf{p}(\varrho_1),\textbf{s}(\varrho),\textbf{g})$
with respect to $\mathbf{p}$ is given by
\begin{align}\label{eq-53}
\nabla \gamma^{fn}_{bk}(\textbf{p}(\varrho_1-1))=\left[\frac{\partial
	\gamma^{fn}_{bk}(\textbf{p}(\varrho_1-1))}{\partial p^{11}_{11}},\dots,\frac{\partial
	\gamma^{fn}_{bk}(\textbf{p}(\varrho_1-1))}{\partial p^{fn}_{bk}},\dots,\frac{\partial
	\gamma^{fn}_{bk}(\textbf{p}(\varrho_1-1))}{\partial p^{FN}_{BK}}\right] ,
\end{align}
where
\begin{equation}
\frac{\partial
	\gamma^{fn}_{bk}(\textbf{p}(\varrho_1-1),\textbf{s}(\varrho),\textbf{g})}{\partial p^{fn}_{bk}}=\frac{1}{\ln2}
\frac{g^{fn}_{bk}}{\vartheta N_0+\sum_{f\in\mathcal{F}}
	\sum_{n\in\mathcal{N}}\sum_{b\in\mathcal{B}}
	\sum_{k\in\mathcal{K}_{b}}p^{fn}_{bk}(\varrho_1-1)
	g^{fn}_{bk}}.
\end{equation}
Thus, the constraints (\ref{eq-packet-loss}) and (\ref{eq-packet-loss-2}) can be rewritten as follows:
\begin{equation}\label{eq-0012}
\mathbb{E}_{\textbf{g}}\left\lbrace \Gamma(\tilde{\gamma}^{fn}_{bk}(\textbf{p}(\varrho_1),\textbf{s}(\varrho),\textbf{g}))\right\rbrace\leq \varepsilon^{\text{max},fn}_{bk}, \forall f\in\mathcal{F}, n\in\mathcal{N}, b\in\mathcal{B}, k\in\mathcal{K}^{\text{u}}_b,
\end{equation}
\begin{equation}\label{eq-0013}
\mathbb{E}_{\textbf{g}}\left\lbrace \Gamma(\tilde{\gamma}^{fn}_{bk}(\textbf{p}(\varrho_1),\textbf{s}(\varrho),\textbf{g}))\right\rbrace\leq \tilde{\varepsilon}^{\text{max},fn}_{bk}, \forall f\in\mathcal{F}, n\in\mathcal{N}, b\in\mathcal{B}, k\in\mathcal{K}^{\text{m}}_b.
\end{equation}

Consequently, (\ref{eq-main-PA}) can be transformed into standard GP problem as follows:

\emph{\textbf{Problem}} $\check{\mathcal{P}}^{\text{D-RBS}}_{\text{PA}}$
\begin{subequations}\label{eq_power}
\begin{align}
\max_{\textbf{p}(\varrho_1)}~~&\prod_{f\in\mathcal{F},n\in\mathcal{N},b\in\mathcal{B},k\in\mathcal{K}}\hat{\Gamma}^{fn}_{bk}(\textbf{p}(\varrho_1),\textbf{s}(\varrho),\textbf{g})\\\text{s.t.}~~&\prod_{f\in\mathcal{F},n\in\mathcal{N}}\hat{\Gamma}^{fn}_{bk}(\textbf{p}(\varrho_1),\textbf{s}(\varrho),\textbf{g})\leq 2^{-R^{\text{min,e}}_{bk}}, \forall b\in\mathcal{B}, k\in\mathcal{K}^{\text{e}}_b,\\&\prod_{f\in\mathcal{F},n\in\mathcal{N}}\hat{\Gamma}^{fn}_{bk}(\textbf{p}(\varrho_1),\textbf{s}(\varrho),\textbf{g})\leq 2^{-R^{\text{min,m}}_{bk}}, \forall b\in\mathcal{B}, k\in\mathcal{K}^{\text{m}}_b,\\&\sum_{f\in\mathcal{F}}\sum_{n\in\mathcal{N}}\sum_{k\in\mathcal{K}_b}p^{fn}_{bk}(\varrho_1)\leq P^{\text{max}}_b, \forall b\in\mathcal{B},\\~~&\nonumber (\ref{eq-0012}), (\ref{eq-0013}).
\end{align}
\end{subequations}

\subsection{Resource Block Assignment}
With the value of $\textbf{p}(\varrho)$ from power allocation problem, the following RB assignment optimization problem is solved:

\emph{\textbf{Problem}} $\mathcal{P}^{\text{D-RBS}}_{\text{RBA}}$
\begin{subequations}\label{eq-main-RBA-22}
    \begin{align}\label{eq-main-RBA-0}
    \max_{\textbf{s}(\varrho_2)}~~&\sum_{f\in\mathcal{F}}\sum_{n\in\mathcal{N}}\sum_{b\in\mathcal{B}}\sum_{k\in\mathcal{K}}s^{fn}_{bk}(\varrho_2)R^{fn}_{bk}(\textbf{p}(\varrho),\textbf{s}(\varrho_2),\textbf{g}),
    \\\label{eq-main-RBA-1}\text{s.t.}~~&\sum_{f\in\mathcal{F}}\sum_{n\in\mathcal{N}}\sum_{k\in\mathcal{K}_b}s^{fn}_{bk}(\varrho_2)p^{fn}_{bk}(\varrho)\leq P^{\text{max}}_b, \forall b\in\mathcal{B},\\&s^{fn}_{bk}(\varrho_2)\in\{0,1\}, \forall f\in\mathcal{F}, n\in\mathcal{N}, b\in\mathcal{B}, k\in\mathcal{K},\\\nonumber&
    (\ref{eq-min-R-e}), (\ref{eq-packet-loss}),(\ref{eq-r21}), (\ref{eq-min-R-m}),(\ref{eq-packet-loss-2}).
    \end{align}
\end{subequations}

Due to objective function and constraints (\ref{eq-main-RBA-0}), (\ref{eq-min-R-e}), and (\ref{eq-min-R-m}), the optimization problem is non-convex. We first relax discrete variable $s^{fn}_{bk}(\varrho_2)$ into continuous one as $s^{fn}_{bk}(\varrho_2)\in[0,1]$. By exploiting AGMA, we transform problem (\ref{eq-main-RBA-22}) to a standard for of GP. Note that in the same way in (\ref{eq-00121}), we can use the first order Taylor series approximation of $\gamma^{fn}_{bk}(\textbf{p}(\varrho),\textbf{s}(\varrho_2),\textbf{g})$ around $s^{fn}_{bk}(\varrho_2-1)$. Defining $\alpha^n_{bk}(\varrho_2)=\sum_{f\in\mathcal{F}}s^{fn}_{bk}(\varrho_2)$ and $\beta_{bk}(\varrho_2)=\sum_{f\in\mathcal{F}}\sum_{n\in\mathcal{N}}s^{fn}_{bk}(\varrho_2)$, one can approximate (\ref{eq-r21}) by the following constraints:
\begin{subequations}
    \begin{align}\label{eq-uu-1}
    &(\omega^{n}_{bk}(\varrho_2))^{-1}+\alpha^n_{bk}(\varrho_2)\beta_{bk}(\varrho_2)(\omega^{n}_{bk}(\varrho_2))^{-1}\leq 1,\forall n\in\mathcal{N}, b\in\mathcal{B}, k\in\mathcal{K}^{\text{u}}_b,\\\label{eq-uu-2}&
    \left[\frac{1}{\phi^n_{bk}(\varrho_2)}\right]^{-\phi^n_{bk}(\varrho_2)}\omega^{n}_{bk}(\varrho_2)\left[\frac{(\alpha^n_{bk}(\varrho_2))^2}{\xi^n_{bk}(\varrho_2)}\right]^{-\xi^n_{bk}(\varrho_2)}\leq 1,\forall n\in\mathcal{N}, b\in\mathcal{B}, k\in\mathcal{K}^{\text{u}}_b,\\\label{eq-uu-3}&
    \alpha^n_{bk}(\varrho_2)\prod_{f\in\mathcal{F}}\left[\frac{s^{fn}_{bk}(\varrho_2)}{\nu^n_{bk}(\varrho_2)}\right]^{-\nu^n_{bk}(\varrho_2)} =1, \forall n\in\mathcal{N}, b\in\mathcal{B}, k\in\mathcal{K}^{\text{u}}_b,\\\label{eq-uu-4}&
    \beta_{bk}(\varrho_2)\prod_{f\in\mathcal{F},n\in\mathcal{N}}\left[\frac{s^{fn}_{bk}(\varrho_2)}{\delta^n_{bk}(\varrho_2)}\right]^{-\delta^n_{bk}(\varrho_2)} =1, \forall n\in\mathcal{N}, b\in\mathcal{B}, k\in\mathcal{K}^{\text{u}}_b,
    \end{align}
\end{subequations}
where $\omega^{n}_{bk}(\varrho_2)$ is an an auxiliary variable,
\begin{align}\label{eq_phi}
&\phi^n_{bk}(\varrho_2)=\frac{1}{(\alpha^n_{bk}(\varrho_2-1))^2+1},~
\xi^n_{bk}(\varrho_2)=\frac{(\alpha^n_{bk}(\varrho_2-1))^2}{(\alpha^n_{bk}(\varrho_2-1))^2+1},~\\&
\nu^{fn}_{bk}(\varrho_2)=\frac{s^{fn}_{bk}(\varrho_2-1)}{\alpha^n_{bk}(\varrho_2-1)},~
\delta^{fn}_{bk}(\varrho_2)=\frac{s^{fn}_{bk}(\varrho_2-1)}{\beta_{bk}(\varrho_2-1)}.
\end{align}

Based on (\ref{eq-uu-1})-(\ref{eq-uu-4}), (\ref{eq-r21}) is replaced by monomial equalities and posynomial inequalities. Next, we convert the objective function and write it into monomial form. By defining the auxiliary variables $\zeta_1$ and $\zeta_2$, (\ref{eq-main-RBA-22}) is written into the following standard for of GP:

\emph{\textbf{Problem}} $\check{\mathcal{P}}^{\text{D-RBS}}_{\text{RBA}}$
\begin{subequations}\label{eq-main-RBA}
    \begin{align}\label{eq-main-RBA-0-1}
    &\max_{\textbf{s}(\varrho_2),\zeta_1(\varrho_2),\boldsymbol{\omega}(\varrho_2),\boldsymbol{\alpha}(\varrho_2),\boldsymbol{\beta}(\varrho_2)}~~\zeta_1,
    \\\label{eq-main-RBA-1-1}\text{s.t.}~~&\zeta_2\left[\frac{\zeta_1(\varrho_2)}{d_0(\varrho_2)}\right]^{-d_0(\varrho_2)}\prod_{f\in\mathcal{F},n\in\mathcal{N},b\in\mathcal{B},k\in\mathcal{K}}\left[\frac{s^{fn}_{bk}(\varrho_2)R^{fn}_{bk}(\textbf{p}(\varrho),\textbf{s}(\varrho_2),\textbf{g})}{d^{fn}_{bk}(\varrho_2)}\right]^{-d^{fn}_{bk}(\varrho_2)}\leq 1,\\\label{eq-main-RBA-4}&
    R^{\text{min,m}}_{bk}\times \prod_{f\in\mathcal{F},n\in\mathcal{N}}\left[\frac{s^{fn}_{bk}(\varrho_2)R^{fn}_{bk}(\textbf{p}(\varrho),\textbf{s}(\varrho_2),\textbf{g})}{\varphi^{fn}_{bk}(\varrho_2)}\right]^{-\varphi^{fn}_{bk}(\varrho_2)}\leq 1, \forall b\in\mathcal{B}, k\in\mathcal{K}^{\text{m}}_b,\\\label{eq-main-RBA-5}&
    R^{\text{min,e}}_{bk}\times \prod_{f\in\mathcal{F},n\in\mathcal{N}}\left[\frac{s^{fn}_{bk}(\varrho_2)R^{fn}_{bk}(\textbf{p}(\varrho),\textbf{s}(\varrho_2),\textbf{g})}{\varphi^{fn}_{bk}(\varrho_2)}\right]^{-\varphi^{fn}_{bk}(\varrho_2)}\leq 1, \forall b\in\mathcal{B}, k\in\mathcal{K}^{\text{e}}_b,\\\label{eq-main-RBA-6}&
    \sum_{k\in\mathcal{K}_b}s^{fn}_{bk}(\varrho_2)\leq 1,\forall f\in\mathcal{F}, n\in\mathcal{N}, b\in\mathcal{B},\\\label{eq-main-RBA-7}&s^{fn}_{bk}(\varrho_2)\in[0,1], \forall f\in\mathcal{F}, n\in\mathcal{N}, b\in\mathcal{B}, k\in\mathcal{K},\\\nonumber
    &(\ref{eq-0012}),(\ref{eq-0013}), (\ref{eq-uu-1})-(\ref{eq-uu-4}).
    \end{align}
\end{subequations}
where $\zeta_2$ is a sufficiently large constant and
\begin{align}\label{eq-varphi}
&\varphi^{fn}_{bk}(\varrho_2)=\frac{s^{fn}_{bk}(\varrho_2-1)R^{fn}_{bk}(\textbf{p}(\varrho))}{\sum_{f\in\mathcal{F}}\sum_{n\in\mathcal{N}}\sum_{b\in\mathcal{B}}\sum_{k\in\mathcal{K}}s^{fn}_{bk}(\varrho_2-1)R^{fn}_{bk}(\textbf{p}(\varrho),\textbf{s}(\varrho_2),\textbf{g})},\\\label{eq-d_0}
&d_0(\varrho_2)=\frac{\zeta_1(\varrho_2-1)}{\zeta_1(\varrho_2-1)+\sum_{f\in\mathcal{F}}\sum_{n\in\mathcal{N}}\sum_{b\in\mathcal{B}}\sum_{k\in\mathcal{K}}s^{fn}_{bk}(\varrho_2-1)R^{fn}_{bk}(\textbf{p}(\varrho),\textbf{s}(\varrho_2),\textbf{g})},\\\label{eq-d_f}
&d^{fn}_{bk}(\varrho_2)=\frac{s^{fn}_{bk}(\varrho_2-1)R^{fn}_{bk}(\textbf{p}((\varrho)))}{\zeta_1(\varrho_2-1)+\sum_{f\in\mathcal{F}}\sum_{n\in\mathcal{N}}\sum_{b\in\mathcal{B}}\sum_{k\in\mathcal{K}}s^{fn}_{bk}(\varrho_2-1)R^{fn}_{bk}(\textbf{p}(\varrho),\textbf{s}(\varrho_2),\textbf{g})}.
\end{align}

Based on Algorithm 1, The optimization problem is iteratively solved until the objective function converges,
i.e., $\|\sum_{f\in\mathcal{F}}\sum_{n\in\mathcal{N}}\sum_{b\in\mathcal{B}}\sum_{k\in\mathcal{K}}s^{fn}_{bk}(\varrho)R^{fn}_{bk}(\textbf{p}(\varrho),\textbf{s}(\varrho_2),\textbf{g})-\\\sum_{f\in\mathcal{F}}\sum_{n\in\mathcal{N}}\sum_{b\in\mathcal{B}}\sum_{k\in\mathcal{K}}s^{fn}_{bk}(\varrho-1)
	R^{fn}_{bk}(\textbf{p}(\varrho),\textbf{s}(\varrho_2-1),\textbf{g})\|\leq {\epsilon}_2$. Note that Proposition 1 holds for small RBs assignment algorithm. The same approach used for D-RBS can be used to solve the optimization problem for S-RBS. Note that the optimization problem of the S-RBS scheme is similar to the D-RBS scheme, with the difference that in the S-RBS scheme optimization problem, there is no (\ref{eq-r21}).

\subsection{Convergence Analysis}
The convergence of joint power allocation and small RBs assignment algorithm, which is shown in Table \ref{Iterative Optimization Algorithm}, is investigated in the following Proposition.

\begin{prop}
	The joint power allocation and small RBs assignment algorithm, which is shown in Table \ref{Iterative Optimization Algorithm}, converges to a suboptimal solution of (\ref{eq-main}) which meets the KKT conditions of the optimization problem (\ref{eq-main}).
\end{prop}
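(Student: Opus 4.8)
The plan is to prove convergence in two nested stages mirroring the structure of Algorithm~I in Table~\ref{Iterative Optimization Algorithm}: first the convergence of each inner successive convex approximation (SCA) loop to a KKT point of its subproblem, and then the convergence of the outer alternate search loop, indexed by $\varrho$, to a KKT point of the full problem (\ref{eq-main}).

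The foundation of the argument is the set of properties enjoyed by the AGMA based monomial approximation. For a posynomial $g$ with monomial terms $u_i$, the arithmetic-geometric mean inequality yields a monomial estimator obtained by fixing the weights to $u_i(\textbf{x}_0)/g(\textbf{x}_0)$ at the current point $\textbf{x}_0$, and this estimator satisfies three properties at $\textbf{x}_0$: (i) tightness, the surrogate equals $g$ at $\textbf{x}_0$; (ii) gradient consistency, the surrogate and $g$ share the same gradient at $\textbf{x}_0$; and (iii) a one-sided global bounding property in the direction appropriate to the optimization. The same three conditions hold for the first-order Taylor approximation $\tilde{\gamma}^{fn}_{bk}$ of (\ref{eq-00121}) used in (\ref{eq-0012}) and (\ref{eq-0013}). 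First I would verify these conditions explicitly for the surrogates appearing in (\ref{eq_power}) and (\ref{eq-main-RBA}), using the weight definitions in (\ref{eq_mu_f}), (\ref{eq_mu_0}) and (\ref{eq-varphi})--(\ref{eq-d_f}).

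Given these properties, each inner loop is an instance of the majorization minimization scheme. For the power allocation loop (Step~1), every solve of (\ref{eq_power}) optimizes a tight surrogate of the true objective; combined with the bounding property, the true objective value is monotone across iterations $\varrho_1$. Since the feasible set is bounded by the power budget $P^{\text{max}}_b$ in (\ref{eqq-1}), the objective is bounded, so by the monotone convergence theorem the sequence of objective values converges. Gradient consistency then guarantees that any fixed point satisfies the KKT conditions of (\ref{eq-main-PA}): at a fixed point the surrogate and the true objective share gradients, so the stationarity and complementary slackness conditions of the surrogate GP coincide with those of the original subproblem. The identical argument applies to the RB assignment loop (Step~2) after the relaxation $s^{fn}_{bk}\in[0,1]$ renders the surrogate a smooth GP, establishing convergence to a KKT point of the relaxed (\ref{eq-main-RBA-22}).

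For the outer loop, the scheme is block coordinate ascent over the two blocks $\textbf{p}$ and $\textbf{s}$. Because each block update returns a point that does not worsen the shared objective, by the inner-loop monotonicity just established, the outer objective sequence indexed by $\varrho$ is itself monotone and bounded, hence convergent, which justifies the stopping criterion of Algorithm~I. The final and hardest step is to argue that the limit point $(\textbf{p}^*,\textbf{s}^*)$ is a KKT point of the full problem (\ref{eq-main}) rather than merely a block-wise stationary point. The main obstacle is that block coordinate methods generically converge only to coordinate-wise stationary points, which coincide with joint KKT points only under regularity conditions such as uniqueness of each block solution or sufficient smoothness of the coupled objective; the coupling through the interference term $I^{fn}_{bk}$ and the product $s^{fn}_{bk}R^{fn}_{bk}$ must be handled with care. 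I would close this by invoking gradient consistency once more, so that at the limit the partial KKT conditions in $\textbf{p}$ and in $\textbf{s}$ hold with true gradients and assemble into the full KKT system of (\ref{eq-main}), while flagging the continuous relaxation of the binary $s^{fn}_{bk}$ as the genuine source of suboptimality that downgrades the guarantee from global optimality to the claimed KKT point.
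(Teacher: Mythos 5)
Your proposal follows essentially the same two-stage route as the paper's own proof: the paper likewise invokes the standard SCA/AGMA convergence conditions (citing Chiang et al.\ and Marks--Wright so that each inner loop reaches a KKT point) and then argues monotone non-decrease of the objective across the alternating power-allocation and RB-assignment updates, concluding convergence from boundedness of the feasible set. Your write-up is in fact more explicit than the paper's one-paragraph argument---particularly in spelling out the tightness, gradient-consistency and bounding conditions for the surrogates, and in flagging that block-coordinate ascent by itself only yields coordinate-wise stationarity (and that the relaxation of the binary $s^{fn}_{bk}$ is the real source of suboptimality), a gap the paper's proof does not acknowledge.
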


\begin{proof}
It is shown in \cite[Subsection IV-A]{chiang2007power} and \cite{marks1978general} that the conditions for SCA convergence are guaranteed and the series of solutions obtained by the AGMA converges to a point where the KKT  conditions of Problem (\ref{eq-main}) are satisfied. 
For fixed RB assignment which is obtained in iteration $\varrho$ and the obtained value of power allocation in next iteration $\varrho+1$, the following inequality holds:
\begin{align}\label{eq-444}
\sum_{f\in\mathcal{F}}\sum_{n\in\mathcal{N}}\sum_{b\in\mathcal{B}}\sum_{k\in\mathcal{K}}s^{fn}_{bk}(\varrho)R^{fn}_{bk}(\textbf{p}(\varrho),\textbf{s}(\varrho),\textbf{g})\leq 
\sum_{f\in\mathcal{F}}\sum_{n\in\mathcal{N}}\sum_{b\in\mathcal{B}}\sum_{k\in\mathcal{K}}s^{fn}_{bk}(\varrho)R^{fn}_{bk}(\textbf{p}(\varrho+1),\textbf{s}(\varrho),\textbf{g}),
\end{align}
which stems from the fact that our goal is to maximize the objective function, and hence, the value of that must increases or remains fixed compared to its value in the previous iteration. Then, for given power allocation and obtaining RB assignment, we have 
\begin{align}\label{eq-555}
&\sum_{f\in\mathcal{F}}\sum_{n\in\mathcal{N}}\sum_{b\in\mathcal{B}}\sum_{k\in\mathcal{K}}s^{fn}_{bk}(\varrho)R^{fn}_{bk}(\textbf{p}(\varrho+1),\textbf{s}(\varrho),\textbf{g})\\\nonumber&\leq 
\sum_{f\in\mathcal{F}}\sum_{n\in\mathcal{N}}\sum_{b\in\mathcal{B}}\sum_{k\in\mathcal{K}}s^{fn}_{bk}(\varrho+1)R^{fn}_{bk}(\textbf{p}(\varrho+1),\textbf{s}(\varrho+1),\textbf{g}).
\end{align}
Similarly, the value of objective function must increases or remains fixed compared to its value in the previous iteration. Therefore, due to bounded feasibility set of the problem and the involved functions,  the algorithm converges to a sub-optimal solution.
\end{proof}

\section{Problem Solution by Monotonic Optimization}\label{monotonic}
In this section, we
adopt the monotonic optimization approach to find the optimal solution of the optimization problem (\ref{eq-main}). We show that joint power allocation and RB assignment problem can be reformulated
as a monotonic optimization problems.
The feasibility of optimization problem (\ref{eq-main}) is guaranteed, if $\textbf{p}$ satisfies each user's target minimum rate and (\ref{eqq-1}). The feasibility check procedure is shown in Table \ref{Projection Algorithm-44}. In this Table, we check the feasibility of $R^{\text{min,e}}_{bk}$ and $R^{\text{min,m}}_{bk}, \forall b,k$ when the transmit power $\textbf{p}$ is constrained by (\ref{eqq-1}). 
\begin{table}[h!]\caption{The Algorithm to Check the Feasibility of Optimization Problem (\ref{eq-main})}\label{Projection Algorithm-44}
	\centering
	\begin{tabular}{p{15cm}}
		\toprule
		\textbf{Algorithm II}: \\
		\midrule
		\textbf{Step1}: If the maximum
		eigenvalue of matrix $\boldsymbol{\Theta}$ is not smaller than 1, the minimum rate of each user is infeasible, else go to Step 2.\\
		\textbf{Step2}: The nonnegative power vector can be calculated as follows: $\textbf{p}=(\textbf{I}-\boldsymbol{\Theta})^{-1}\textbf{u}$, where $u^{fn}_{bk}=\frac{R^{\text{min,i}}_{bk}\vartheta N_0}{g^{fn}_{bk}}, \forall~ \text{if}~k\in \mathcal{K}^{\text{e}}_b~ \text{then}~\text{i}=\text{e}$, $\text{elseif}~k\in \mathcal{K}^{\text{m}}_b~ \text{then}~\text{i}=\text{m}$ and
		the elements of matrix $\boldsymbol{\Theta}$ are given by\\~~~~~~~~~~~~~~~~~~~~~~~~~~~~~~~~~~
		$\Theta^{fn}_{bk}=\begin{cases}
		0, & \text{if}~~ b = k.\\
		\frac{R^{\text{min,i}}_{bk}g^{fn}_{bk}}{g^{fn}_{bk}}, & \text{if}~~ b \neq k, \text{if}~k\in \mathcal{K}^{\text{e}}_b~ \text{then}~\text{i}=\text{e},~ \text{elseif}~k\in \mathcal{K}^{\text{m}}_b~ \text{then}~\text{i}=\text{m}.
		\end{cases}$\\~~~~~~~~
		If $\textbf{p}$ satisfies (\ref{eqq-1}), the minimum rate of each user is feasible.\\
		\bottomrule
		\label{Projection Algorithm-44}
	\end{tabular}
\end{table}
We define $\tilde{p}^{fn}_{bk}={s}^{fn}_{bk}{p}^{fn}_{bk}$ and rewrite optimization problem (\ref{eq-main}) in an equivalent form as follows:

\emph{\textbf{Problem}} $\mathcal{P}^{\text{D}}$
	\begin{align}\label{eq-main-22}
	\max_{\tilde{\textbf{p}},\textbf{s}}~~& R(\tilde{\textbf{p}}),
	\\\nonumber\label{eqq-1-mo}\text{s.t.}~~&
	(\ref{eq-min-R-e}), (\ref{eq-packet-loss})
	-(\ref{eq-packet-loss-2}), (\ref{eqq-1})-(\ref{eqq-3}).
	\end{align}
The optimization problem (\ref{eq-main-22}) is not convex. This problem can be rewritten as a canonical form of monotonic optimization as follows:

\emph{\textbf{Problem}} $\mathcal{P}^{\text{Monotonic}}$
\begin{align}
	\max_{\textbf{y}}~~&\sum_{f\in\mathcal{F}}\sum_{n\in\mathcal{N}}\sum_{b\in\mathcal{B}}\sum_{k\in\mathcal{
	K}}\log_2\left( 1+y^{fn}_{bk}\right) ,
	\\\label{eq-main-11-mono-111}\nonumber\text{s.t.}~~&
	\textbf{y}\in\mathcal{Y},
\end{align}
where $\textbf{y}=[y^{11}_{11},\dots,y^{fn}_{bk},\dots,y^{FN}_{BK}]$. The feasible set of the optimization problem is given by $\mathcal{Y}=\mathcal{G}\cap\mathcal{H}$ where
\begin{align}
\mathcal{G}=&\Big\lbrace \textbf{y} \big| y^{fn}_{bk} \leq  \gamma^{fn}_{bk}(\tilde{\textbf{p}}),
\tilde{\textbf{p}}\in\mathcal{P}, \textbf{s}\in\mathcal{S}
\Big\rbrace,
\end{align}
\begin{align}\nonumber
\mathcal{H}=&\Big\lbrace \textbf{y} \big|0\leq y^{fn}_{bk},\prod_{f\in\mathcal{F}}\prod_{n\in\mathcal{N}}(1+y^{fn}_{bk})\geq 2^{R^{\text{min,e}}_{bk}}, \forall  k\in\mathcal{K}^{\text{e}}_b,\prod_{f\in\mathcal{F}}\prod_{n\in\mathcal{N}}(1+y^{fn}_{bk})\geq 2^{R^{\text{min,m}}_{bk}}, \forall k\in\mathcal{K}^{\text{m}}_b,\\& \mathbb{E}_{\textbf{g}}\left\lbrace \Gamma(y^{fn}_{bk}) \right\rbrace  \leq \varepsilon^{\text{max},fn}_{bk}, \forall k\in\mathcal{K}^{\text{u}}_b,
\mathbb{E}_{\textbf{g}}\left\lbrace \Gamma(y^{fn}_{bk}) \right\rbrace  \leq \tilde{\varepsilon}^{\text{max},fn}_{bk}, \forall  k\in\mathcal{K}^{\text{m}}_b
\Big\rbrace,
\end{align}
where $\mathcal{P}$ and $\mathcal{S}$ are the feasible sets spanned by constraints
(\ref{eqq-1})-(\ref{eqq-3}) as follows:
\begin{equation}
\mathcal{P}=\Big\{\tilde{\textbf{p}} \big|\sum_{f\in\mathcal{F}}\sum_{n\in\mathcal{N}}\sum_{k\in\mathcal{K}_b}\tilde{p}^{fn}_{bk}\leq P^{\text{max}}_b, \tilde{p}^{fn}_{bk}\geq 0,\forall f\in\mathcal{F}, n\in\mathcal{N}, b\in\mathcal{B}, k\in\mathcal{K}\Big\rbrace,
\end{equation}
\begin{equation}
\mathcal{S}=\Big\{ \textbf{s} \big|  \sum_{k\in\mathcal{K}_b}s^{fn}_{bk}\leq 1, s^{fn}_{bk}\in\{0,1\},(\ref{eq-r21}), \forall f\in\mathcal{F}, n\in\mathcal{N}, b\in\mathcal{B}, k\in\mathcal{K}\Big\}.
\end{equation}
The optimal solution to optimization problem (\ref{eq-main-11-mono-111}) is denoted by $\textbf{y}^*=[(y^{11}_{11})^*,\dots,(y^{fn}_{bk})^*,\dots,\\(y^{FN}_{BK})^*]$. With the optimal value of  $\textbf{y}^*$ at hand, the optimal transmit powers, i.e., $\tilde{\textbf{p}}^*$, is the solution of $FNBK$ linear equations $(y^{fn}_{bk})^*(I^{fn}_{bk}+\vartheta N_0)-p^{fn}_{bk}g^{fn}_{bk}=0$ with $FNBK$ variables $p^{11}_{11},\dots,p^{fn}_{bk},\dots,p^{FN}_{BK}$.
Since the objective function $R(\textbf{y})$ is monotonic, we use monotonic optimization method to find its globally optimal solution \textcolor{blue}{for} our optimization problem. The key idea of global optimal algorithm is based on constructing a sequence of polyblock outer approximation of $\mathcal{Y}$, i.e., $\mathcal{R}_0\supset \mathcal{R}_1\supset\dots\supset \mathcal{Y}$, with its proper vertexes until the optimal
vertex of polyblock $\mathcal{R}_{\varrho}$ in the ${\varrho}^{\text{th}}$ iterative, lies in $\mathcal{Y}$. First, we construct an initial outer polyblock $\mathcal{R}_0$, which contains $\mathcal{Y}$, with one vertex $\textbf{y}_0=[y^{11}_{11,0},\dots,y^{fn}_{bk,0},\dots,y^{FN}_{BK,0}]$ and vertex set $\mathcal{T}_0=\{\textbf{v}_0\}$ where $\textbf{v}_0=\textbf{y}_0$. Then, $\textbf{y}_0$ is a global optimal solution and $\textbf{y}^*=\textbf{y}_0$, if $\textbf{y}_0\in\mathcal{Y}$; otherwise, construct a smaller outer polyblock $\mathcal{R}_1\subset \mathcal{R}_0$ of $\mathcal{Y}$ with vertex set $\mathcal{T}_1=\{\textbf{v}^{11}_{11,1},\dots,\textbf{v}^{fn}_{bk,1},\dots,\textbf{v}^{FN}_{BK,1}\}\}$,  where $\textbf{v}^{fn}_{bk,1}=\textbf{y}_0-(1-\beta_0)y^{fn}_{bk,0}\textbf{e}^{fn}_{bk}$ and $\beta_0$ is the projection of $\textbf{v}_0$ on the upper boundary of $\mathcal{G}$. Then, evaluate the objective function at each vertex in $\mathcal{T}_1$ to determine which one maximizes the objective function of (\ref{eq-main-11-mono-111}), i.e., $\textbf{y}_1=\arg\max_{\textbf{v}}\{R(\textbf{v})|\textbf{v}\in\mathcal{T}_1\}$. The procedure is repeated until $R(\textbf{y}_{\varrho_3})-R(\beta_{\varrho_3}\textbf{y}_{\varrho_3})\geq \delta$, where $\delta>0$. The elements of $\textbf{y}_0$ that satisfy the best SINR
for user $k$ from BS $b$ on RB $(f,n)$ can
be set to
\begin{equation}
y^{fn}_{bk,0}= P^{\text{max}}_bg^{fn}_{bk}/\vartheta N_0.
\end{equation}
Furthermore, the $\varrho_3^{\text{th}}$ projection, $\beta_{\varrho_3}$ can be calculated using $\beta_{\varrho_3}=\max\left\lbrace \beta|\beta \textbf{y}_{\varrho_3}\in \mathcal{Y} \right\rbrace $ which  is equivalent with the following problem:
\begin{subequations}\label{eq-BS}
\begin{align}
\max_{\beta,\tilde{\textbf{p}},\textbf{y}}~~&\beta,
\\\label{eqq-beta-0}\text{s.t.}~~&
\beta y^{fn}_{bk}(I^{fn}_{bk}+\vartheta N_0)\leq \tilde{p}^{fn}_{bk}g^{fn}_{bk},\forall f\in\mathcal{F}, n\in\mathcal{N}, b\in\mathcal{B},
\\&
\sum_{f\in\mathcal{F}}\sum_{n\in\mathcal{N}}\sum_{k\in\mathcal{K}_b}\tilde{p}^{fn}_{bk}\leq P^{\text{max}}_b\\
\label{eqq-beta-2}&
\prod_{f\in\mathcal{F}}\prod_{n\in\mathcal{N}}(1+\beta y^{fn}_{bk})\geq 2^{R^{\text{min,e}}_{bk}}, \forall b\in\mathcal{B}, k\in\mathcal{K}^{\text{e}}_b,\\\label{eqq-beta-3}&\prod_{f\in\mathcal{F}}\prod_{n\in\mathcal{N}}(1+\beta y^{fn}_{bk})\geq 2^{R^{\text{min,m}}_{bk}}, \forall b\in\mathcal{B}, k\in\mathcal{K}^{\text{m}}_b\\\label{eqq-beta-4}&
\mathbb{E}_{\textbf{g}}\left\lbrace \Gamma(\beta y^{fn}_{bk}) \right\rbrace  \leq \varepsilon^{\text{max},fn}_{bk}, \forall k\in\mathcal{K}^{\text{u}}_b\\\label{eqq-beta-5}&
\mathbb{E}_{\textbf{g}}\left\lbrace \Gamma(\beta y^{fn}_{bk}) \right\rbrace  \leq \tilde{\varepsilon}^{\text{max},fn}_{bk}, \forall  k\in\mathcal{K}^{\text{m}}_b.
\end{align}
\end{subequations}
The bisection method can be used to solve (\ref{eq-BS}), which is given in Table \ref{Projection Algorithm}.
\begin{table}[h!]\caption{The Bisection Algorithm for Calculating $\beta$}\label{Projection Algorithm}
	\centering
	\begin{tabular}{p{15cm}}
		\toprule
		\textbf{Algorithm III}: The Bisection Algorithm for Calculating $\beta$\\
		\midrule
		\textbf{Initialization} Set $Y_0=\max_{f,n,b,k}\{y^{fn}_{bk,0}\}$, $O_0=0$, $\varPi>0$ and $\varrho_3=0$;\\
		\textbf{While} $(Y_{\varrho_3}-O_{\varrho_3})/Y_{\varrho_3}\geq \varPi$\\~~~
		\textbf{Step1}: $\varrho_3=\varrho_3+1$;\\~~~
		\textbf{Step2}: 
		Update $\beta^*=(Y_{\varrho_3-1}+O_{\varrho_3-1})/2$;\\ ~~~
		\textbf{Step3}: Check whether $\tilde{p}^{fn}_{bk}$ satisfies constraints (\ref{eqq-beta-0})-(\ref{eqq-beta-5}) for a given $\beta^*$. \\~~~
		\textbf{Step4}: If finding $\tilde{p}^{fn}_{bk}$ is feasible (Check by Algorithm II), $Y_{\varrho_3}=Y_{\varrho_3-1}$, $O_{\varrho_3}=\beta^*$; otherwise $Y_{\varrho_3}=\beta^*$, $Y_{\varrho_3}=Y_{\varrho_3-1}$;\\
		\textbf{End} \\
		\textbf{Return}: Finally, 
		$\beta^*=\beta_{\varrho_3}$.\\
		\bottomrule
		\label{Projection Algorithm}
	\end{tabular}
\end{table}
Then, we can find the optimal solution of (\ref{eq-BS}), with $\textbf{y}^*=\beta^* \textbf{y}$, by solving
$FNBK$ linear equations $(y^{fn}_{bk})^*(I^{fn}_{bk}+\vartheta N_0)-p^{fn}_{bk}g^{fn}_{bk}=0$, which is
shown in Table \ref{Outer Polyblock Approximation Algorithm}. For any $\delta>0$, the
convergence analysis of Algorithm III is similar to Proposition 3.9 of \cite{zhang2013monotonic}. 
Finally, from optimal vertex $\textbf{y}^*$, we can obtained RB allocation $[s^{fn}_{bk}]^*$ as follows:
\begin{equation}
[s^{fn}_{bk}]^*=\begin{cases}
1, & \text{if}~~ [y^{fn}_{bk}]^*>0.\\
0, & \text{if}~~ [y^{fn}_{bk}]^*=0.
\end{cases}
\end{equation}
\begin{table}[h!]\caption{The Proposed Monotonic Algorithm for Optimization Problem}\label{Outer Polyblock Approximation Algorithm}
	\centering
	\begin{tabular}{p{15cm}}
		\toprule
		\textbf{Algorithm IV}: The Proposed Monotonic Algorithm for Optimization Problem\\
		\midrule
		\textbf{Feasibility Check}: Check the feasibility by Algorithm II.\\
		\textbf{Initialization} Construct an initial outer polyblock $\mathcal{R}^{(0)}$ of $\mathcal{Y}$ with one vertex  $\textbf{y}_0=[y^{11}_{11,0},\dots,y^{fn}_{bk,0},\dots,y^{FN}_{BK,0}]$ \\~~~~~~~~~~~~~~~~~where $y^{fn}_{bk,0}$ is given by  
		$y^{fn}_{bk,0}=P^{\text{max}}_bg^{fn}_{bk}/\vartheta N_0$; Set $\delta>0$ and iteration number $\varrho_4=1$.\\~~~
		\textbf{Step1}: Obtain $\beta$ using
		 Algorithm III.\\
		\textbf{While} $R(\textbf{y}_{\varrho_4})-R(\beta_{\varrho_4}\textbf{y}_{\varrho_4})\geq \delta$\\~~~
		\textbf{Step2}: $\varrho_4=\varrho_4+1$;\\~~~
		\textbf{Step3}: Generate a smaller polyblock $\mathcal{R}_{\varrho_4}$ with vertex set $\mathcal{T}_{\varrho_4}$ by replacing $\textbf{y}_{\varrho_4-1}$ with new vertices $\{\textbf{v}^{11}_{11,\varrho_4},\dots,\textbf{v}^{fn}_{bk,\varrho_4},\dots,\textbf{v}^{FN}_{BK,\varrho_4}\}$, where $\textbf{v}^{fn}_{bk,\varrho_4}=\textbf{y}_{\varrho_4-1}-(1-\beta_{\varrho_4-1})y^{fn}_{bk,\varrho_4-1}\textbf{e}^{fn}_{bk}$.\\~~~
		\textbf{Step4}: In set $\mathcal{T}_{(\varrho_4+1)}$, find vertex $\textbf{v}_{(\varrho_4+1)}$ which maximizes
		the objective function, i.e, $\textbf{y}_{(\varrho_4+1)}=\arg\max_{\textbf{v}\in\mathcal{T}_{(\varrho_4+1)}}\left\lbrace R(\textbf{v}) \right\rbrace$;\\~~~
		\textbf{Step5}: Calculate $\beta_{\varrho_4}$ using Algorithm III;\\
		\textbf{End} \\
		\textbf{Return}: Find optimal solution by solving  
		$FNBK$ linear equations $(y^{fn}_{bk})^*(I^{fn}_{bk}+\vartheta N_0)-p^{fn}_{bk}g^{fn}_{bk}=0$ with $\textbf{y}^*=\beta_{\varrho_4} \textbf{y}_{\varrho_4}$.\\
		\bottomrule
		\label{Outer Polyblock Approximation Algorithm}
	\end{tabular}
\end{table}

\section{Computational Complexity}\label{ComputationalComplexity andConvergence Analysis}
The computational complexity and the convergence of the resource allocation algorithms in both schemes are investigated in this section. We use CVX to solve the GP sub-problems
with the interior point method, therefore, the number of iterations needed to meet the required accuracy is $\frac{\log(\Gamma/\varrho_0\Psi)}{\log(\pi)}$, where $\Gamma$ is the total number of constraints, $\varrho_0$ is the initial point to approximate the accuracy of interior point method, $0<\Psi<<1$ is the stopping criterion for the interior point method and $\pi$ is used for updating the accuracy of interior point method \cite{boyd2004convex}. We assume $K^{\text{e}}_b=K^{\text{e}},\forall b$, $K^{\text{m}}_b=K^{\text{m}},\forall b$, and $K^{\text{u}}_b=K^{\text{u}},\forall b$. Then, for the D-RBS scenario, the number of constraints for power allocation and small RBs assignment sub-problems are $\Gamma_1=BK^{\text{e}}+BK^{\text{m}}+FNBK^{\text{u}}+FNBK^{\text{m}}+B+FNBK$, and $\Gamma_2=4 NBK^{\text{u}}+1+FNBK^{\text{u}}+BK^{\text{m}}+BK^{\text{e}}+FNBK^{\text{m}}+FNB+FNBK$, respectively. The
number of computations needed to convert power allocation and small RBs assignment non-convex sub-problems using AGMA are, respectively, $\Theta_1=3FNBK$, and $\Theta_2=FNB^2K+6FNBK+FNBK$. Hence, the computational complexity for power allocation and small RBs assignment sub-problems are given as follows:

\begin{align}\label{eq-com-1}
&\Theta_1\times\frac{\log(\Gamma_1/\varrho_0\Psi)}{\log(\pi)},~\Theta_2\times\frac{\log(\Gamma_2/\varrho_0\Psi)}{\log(\pi)},
\end{align}
respectively. For S-RBS, the total number of RBs is less than D-RBS, and (\ref{eq-r21}) is not considered in this scenario. Therefore, for S-RBS, the number of constraints for power allocation and RBs assignment sub-problems are $\tilde{\Gamma}_1=BK^{\text{e}}+BK^{\text{m}}+FNBK^{\text{u}}+FNBK^{\text{m}}+B+FNBK$, and $\tilde{\Gamma}_2=1+FNBK^{\text{u}}+BK^{\text{m}}+BK^{\text{e}}+FNBK^{\text{m}}+FNB+FNBK$, respectively. The
number of computations needed to convert power allocation and RBs assignment non-convex sub-problems using AGMA are, respectively, $\tilde{\Theta}_1=3FNBK$, and $\tilde{\Theta}_2=FNB^2K+6FNBK+FNBK$. Hence, the computational complexity for power allocation and RBs assignment sub-problems are given as follows:

\begin{align}\label{eq-com-2}
&\tilde{\Theta}_1\times\frac{\log(\tilde{\Gamma}_1/\varrho_0\Psi)}{\log(\pi)},~\tilde{\Theta}_2\times\frac{\log(\tilde{\Gamma}_2/\varrho_0\Psi)}{\log(\pi)}.
\end{align}

From Equations (\ref{eq-com-1}) and (\ref{eq-com-2}), it can be concluded that the computational complexity of D-RBS method is more than that of S-RBS method.

In the monotonic optimization approach, for a problem with dimensions $\hat{\tau}_1$, the number of iterations for obtaining the projection of each vertex by the bisection
algorithm is $\hat{\tau}_2$ and the number of iterations for the polyblock algorithm is $\hat{\tau}_3$, a simplified complexity order can be given by \cite{moltafet2018optimal}
\begin{equation}
\mathcal{O}(\hat{\tau}_3(\hat{\tau}_3\hat{\tau}_1+\hat{\tau}_2))
\end{equation}
The computational complexity of the optimal and suboptimal solutions are illustrated in Table \ref{11}. As can seen, with 5.3\% increasing in the computational complexity, we can achieve 8.6\% improvement in performance.
\begin{table*}[h!]
	\centering
	\caption{The computational complexity of the optimal and suboptimal solutions.}
	\label{11}
	\setlength{\tabcolsep}{3pt}
	\begin{tabular}{|c|c|c|}
		\hline
		& Optimal Solution over Suboptimal Solution & D-RBS over S-RBS, \\
		\hline
		\toprule
		Complexity & 5.3\% $\uparrow$ & 15.7\% $\uparrow$ 
		\\	\hline
		Performance & 8.6\% $\uparrow$ & 36\% $\uparrow$
		\\	\hline
	\end{tabular}
\end{table*}

\section{Simulation Results}\label{SimulationResults}
We consider a multi-cell downlink multi-user scenario with $B=4$ BSs and 240 kHz bandwidth serving $K^{\text{e}}_b=K^{\text{m}}_b=K^{\text{u}}_b=2$ users at each BS which are uniformly distributed in a 500m$\times$500m square area. The channel gains are modelled as $g^{fn}_{bk}=\varkappa^{fn}_{bk}L^{-c}_{bk}$ where $c=3$ is the path loss exponent, $L_{bk}$ is the normalized distance
between the BS $b$ and user $k$ and $\varkappa^{fn}_{bk}$ is exponentially distributed with mean 1 \cite{she2018joint}. The noise power of each sub-carrier is normalized to
1 or 0 dB. We also set $\epsilon_1=10^{-5}$, $\epsilon_2=10^{-5}$ and $\xi_2=10^5$ in all simulations \cite{3GPP,3GPP-36-211}. For 1000 random channel realizations, our algorithm is run and the average of results are presented. In Fig. \ref{Frame_Structure_Two_Scenario}, the structure of RBs of S-RBS and D-RBS scenarios are illustrated for 240 kHz bandwidth and 1 ms TTI. For D-RBS, we assume RBs of equal sizes in both time and frequency, i.e., TTI=0.25 ms and BW=15 kHz which results in 16$\times$4 RBs. In S-RBS, we assume static frame structure for each cell. At each cell, there are 16 RBs: four RBs with TTI=1 ms and BW=15 kHz, four RBs with TTI=0.5 ms and BW=30 kHz and eight RBs with TTI=0.25 ms and BW=60 kHz.
We assume that the time duration and bandwidth of each RB are shorter than the channel coherence time and bandwidth, respectively. 


\begin{figure}[h]
    \begin{center}
        \includegraphics[width=4 in]{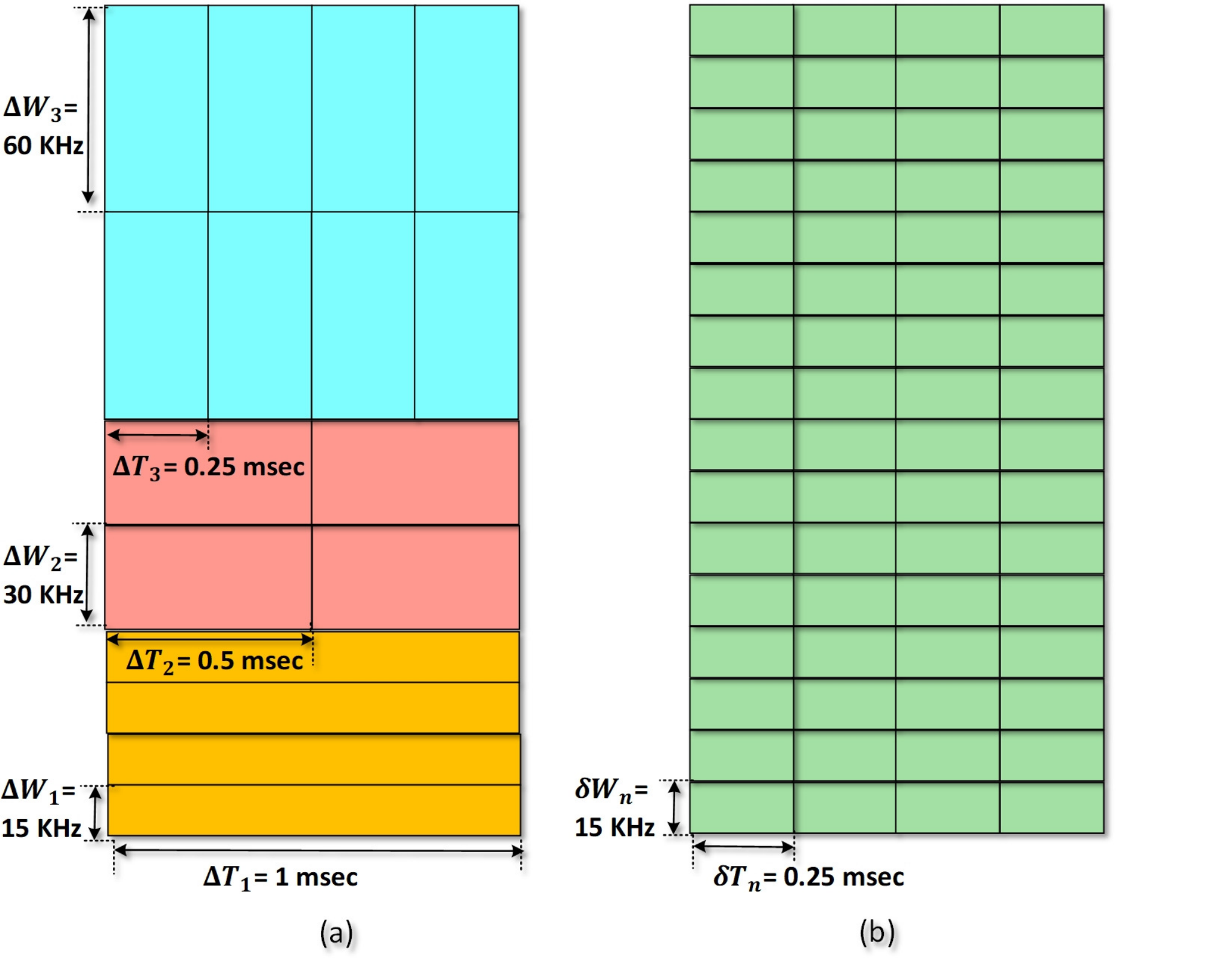} 
        \caption{The structure of RBs of: a) S-RBS scenario and b) D-RBS scenario}
        \label{Frame_Structure_Two_Scenario}
    \end{center}
\end{figure}

\subsection{Effect of total number of RBs}
The sum rate versus the total number of RBs for both scenarios is illustrated in Fig. \ref{Effect of RBs}. As can be seen, due to the opportunistic nature of fading channels, the sum rate in both scenarios increases as the total number of RBs increases. In the second scenario, due to flexibility of scheduling structure, improvement  in terms of sum rate is better than the first scenario. The figure also shows that the sum rate is decreased by increasing the value of $R^{\text{min,e}}$, and $R^{\text{min,m}}$. This decrease is due to the fact that with increasing the value of $R^{\text{min,e}}$, and $R^{\text{min,m}}$, the feasibility region of RBs assignment and power allocation become smaller and leading to less sum rate in both scenarios. However, It should be noted that this decline in the second scenario is less due to the flexible allocation of resources compared to the first scenario. In both scenarios, we set ${\varepsilon}^{\text{max},fn}_{bk}={\varepsilon}^{\text{max}}=10^{-5}, \forall f\in\mathcal{F}, n\in\mathcal{N}, b\in\mathcal{B}, k\in\mathcal{K}^{\text{u}}_b$, $\tilde{\varepsilon}^{\text{max},fn}_{bk}=\tilde{\varepsilon}^{\text{max}}=10^{-3}, \forall f\in\mathcal{F}, n\in\mathcal{N}, b\in\mathcal{B}, k\in\mathcal{K}^{\text{e}}_b$, $R^{\text{min,e}}_{bk}=R^{\text{min,e}}, \forall b\in\mathcal{B}, k\in\mathcal{K}^{\text{e}}_b$, $R^{\text{min,m}}_{bk}=R^{\text{min,m}}, \forall b\in\mathcal{B}, k\in\mathcal{K}^{\text{m}}_b$, and $P^{\text{max}}_b=P^{\text{max}}=$ 40 dB, $\forall b\in\mathcal{B}$ \cite{popovski20185g}. We also compare the performance of optimal and suboptimal solutions in Fig. \ref{Effect of RBs}. As can be seen, our suboptimal solution is bounded by optimal solution and it  has values close to optimal one.
%

\begin{figure}[h]
	\begin{center}
		\subfigure[]{\label{Effect of RBs}
			\includegraphics[width=4 in]
			{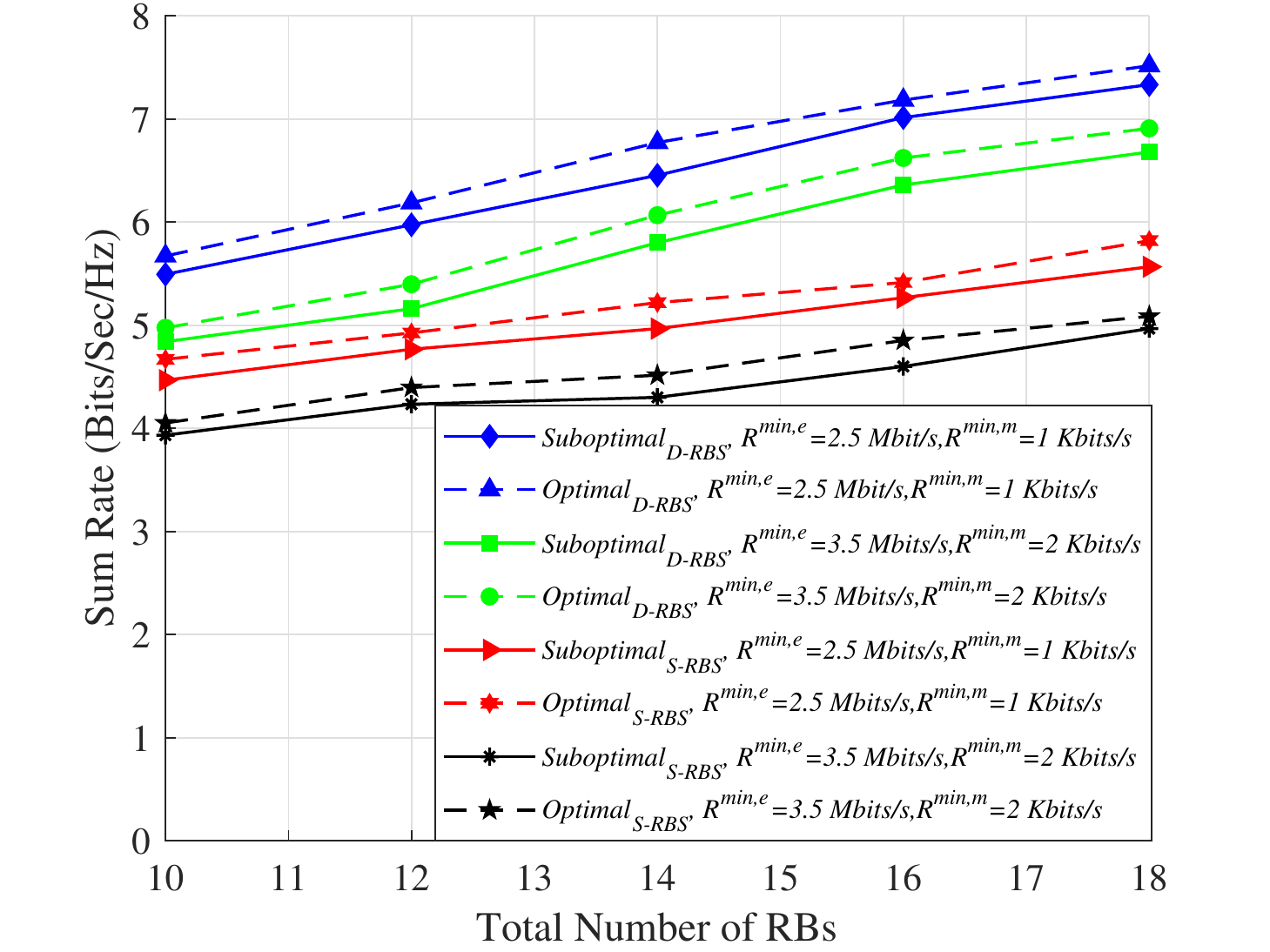}}
		\subfigure[]{\label{Fig_K_e_K_u}
			\includegraphics[width=3.6 in]
			{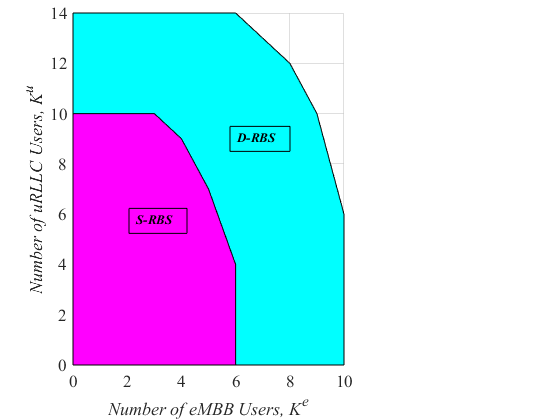}}
		\caption{(a) The sum rate versus the total number of BRs for both scenarios for different values of $R^{\text{min,e}}$, and $R^{\text{min,m}}$ for optimal and suboptimal solutions. (b) The regions of D-RBS and S-RBS that satisfy URLLC and eMBB users' requirements.}
	\end{center}
\end{figure}

\subsection{Effect of Number of Users}
In Fig. \ref{Fig_K_e_K_u},  we present simulation results for the regions of D-RBS and S-RBS that satisfy URLLC and eMBB users' requirements versus the number of URLLC and eMBB users. In each scheme, the RB is chosen such that the target latency, reliability and throughput of URLLC and eMBB users are satisfied. By increasing the number of users, we see that the D-RBS scheme outperforms the S-RBS scheme due to the flexible RB adaptation. In each scheme, to obtain feasible regions, for fixed number of eMBB or URLLC users, we obtain the number of other users can be serviced by same number of RBs.
%

\subsection{The Outage Probability}
In the following, we define outage probabilities as following to investigate behavior of both scenarios for different values of $R^{\text{min,e}}_{bk}$, and $R^{\text{min,m}}_{bk}$
\begin{align}
&\mathcal{P}^{\text{e}}=\Pr\left( \sum_{f\in\mathcal{F}}\sum_{n\in\mathcal{N}}s^{fn}_{bk}R^{fn}_{bk}(\textbf{p},\textbf{s},\textbf{p})\leq R^{\text{min,e}}_{bk}\right) , \forall b\in\mathcal{B}, k\in\mathcal{K}^{\text{e}}_b,\\&\mathcal{P}^{\text{m}}=\Pr\left(\sum_{f\in\mathcal{F}}\sum_{n\in\mathcal{N}}s^{fn}_{bk}R^{fn}_{bk}(\textbf{p},\textbf{s},\textbf{p})\leq R^{\text{min,m}}_{bk}\right) , \forall b\in\mathcal{B}, k\in\mathcal{K}^{\text{m}}_b.
\end{align}

We obtain the outage probability of both scenarios via Mont Carlo simulation. The outage
probability of mMTC users for both scenarios versus different values of $R^{\text{min,e}}$ is shown in Fig. \ref{Fig_Rate_outage_vs_R_min_e_diff_scenarios}. As can be seen, with increasing $R^{\text{min,e}}$, the outage probability is also increased for both scenarios. However, due to the larger feasibility region, the second scenario has lower outage probability
compared to the first one. On the other hand, the second scenario can efficiently schedule RBs between different users as compared to the first scenario. In other words, flexible scheduling structure in the second scenario has more degrees of freedom
to assign suitable RBs users
of different cells. Therefore, due to higher outage probability in the first scenario, it cannot satisfy the minimum rate requirements of users.

\begin{figure}[h]
	\begin{center}
		\subfigure[]{\label{Fig_Rate_outage_vs_R_min_e_diff_scenarios}
			\includegraphics[width=3.1 in]
			{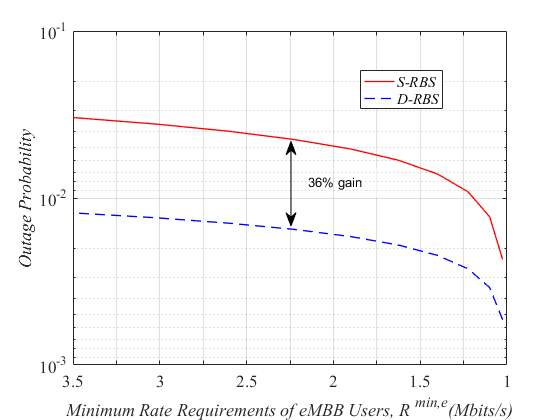}}
		\subfigure[]{\label{Fig_Rate_outage_vs_R_min_m_diff_scenarios}
			\includegraphics[width=3.1 in]
			{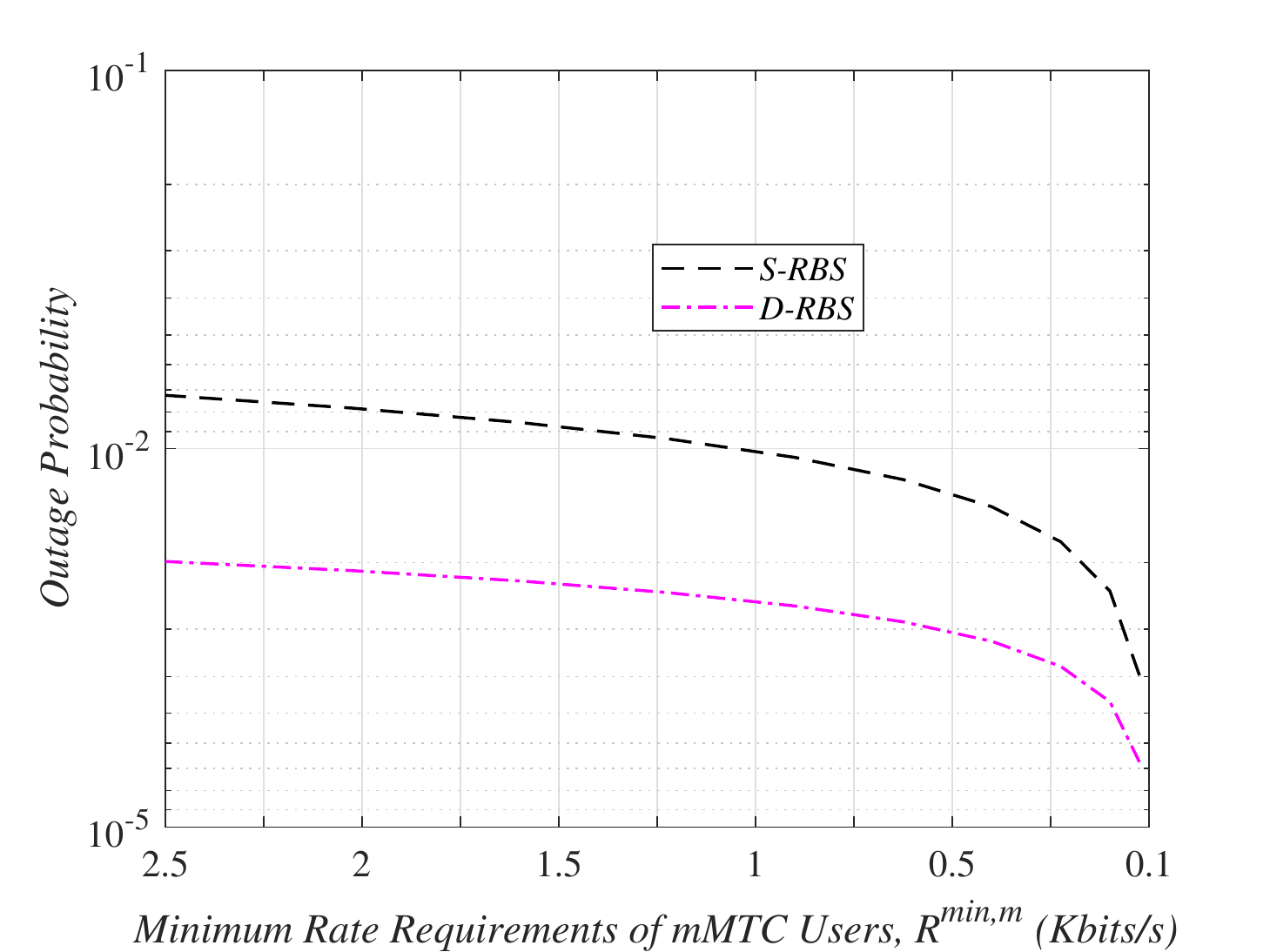}}
		\caption{(a) The outage probability versus the different values of $R^{\text{min,e}}$ for both scenarios. (b) The outage probability versus the different values of $R^{\text{min,m}}$ for both scenarios.}
	\end{center}
\end{figure}


The outage probability versus the value of minimum rate requirements of mMTC users is illustrated in Fig. \ref{Fig_Rate_outage_vs_R_min_m_diff_scenarios}.

%

\subsection{Convergence Analysis}
In Fig. \ref{Fig_Number_of_Iterations_vs_RBs_diff_scenarios}, the total number of iterations needed for the algorithm to converge for both scenarios versus the total number of RBs is depicted. It is seen from the figure that as  the number of RBs increases, the algorithm needs more iteration to converge. It can also be seen that the computational complexity of RBs assignment problem for both scenarios is higher than
that of power allocation problem because the number of constraints in RBs assignment problem is higher than power allocation problem. Also, the computational complexity of D-RBS is higher than S-RBS because the number of constraints in D-RBS is higher than S-RBS.
\begin{figure}[h]
    \begin{center}
        \includegraphics[width=4.5 in]{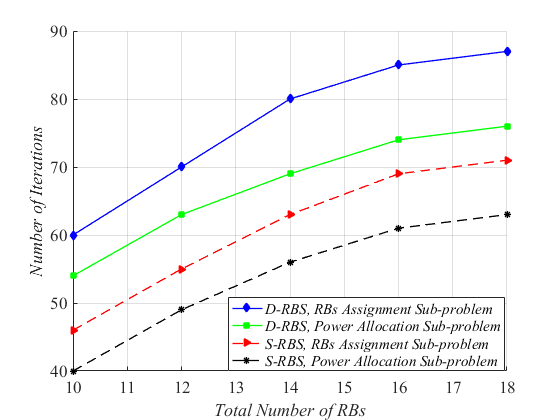} 
        \caption{The number of iterations required for convergence for both power allocation and RBs assignment sub-problems for both scenarios versus the total number of RBs.} 
        \label{Fig_Number_of_Iterations_vs_RBs_diff_scenarios}
    \end{center}
\end{figure}

\section{Conclusion}\label{conclusions}
In this paper, we developed a flexible and dynamic BRs assignment and transmit power allocation for  multi-user multi-cell downlink scheme to satisfy every user's rate and PER requirements. Our proposed scheduling frame work, due to the large feasibility region and more degrees of freedom to assign RBs had better performance than the traditional RBs scheme. To solve the resulting non-convex optimization problem, we relied in ASM and exploited the successive convex approximation method to obtain a convex approximation of the original problem which could be solved by existing tools like CVX. Via simulations, we showed that the proposed scheme has 26\% and 36\% performance gain, in the sum rate and outage probability, respectively, over the S-RBS scheme.

\hyphenation{op-tical net-works semi-conduc-tor}
\bibliographystyle{IEEEtran}
\bibliography{IEEEabrv,Bibliography}

\begin{thebibliography}{10}
\providecommand{\url}[1]{#1}
\csname url@samestyle\endcsname
\providecommand{\newblock}{\relax}
\providecommand{\bibinfo}[2]{#2}
\providecommand{\BIBentrySTDinterwordspacing}{\spaceskip=0pt\relax}
\providecommand{\BIBentryALTinterwordstretchfactor}{4}
\providecommand{\BIBentryALTinterwordspacing}{\spaceskip=\fontdimen2\font plus
\BIBentryALTinterwordstretchfactor\fontdimen3\font minus
  \fontdimen4\font\relax}
\providecommand{\BIBforeignlanguage}[2]{{%
\expandafter\ifx\csname l@#1\endcsname\relax
\typeout{** WARNING: IEEEtran.bst: No hyphenation pattern has been}%
\typeout{** loaded for the language `#1'. Using the pattern for}%
\typeout{** the default language instead.}%
\else
\language=\csname l@#1\endcsname
\fi
#2}}
\providecommand{\BIBdecl}{\relax}
\BIBdecl

\bibitem{3GPP}
3GPP, \emph{Study on Scenarios and Requirements for Next Generation Access
  Technologies}, Technical Specification Group Radio Access Network, Technical
  Report 38.913, Release 14, October 2016.

\bibitem{3GPP-36-211}
------, \emph{LTE; Evolved Universal Terrestrial Radio Access (E-UTRA);
  Physical channels and modulation}, Technical Specification Group Radio Access
  Network, Technical Report 36.211, Release 10, January 2011.

\bibitem{anand2017joint}
A.~Anand, G.~de~Veciana, and S.~Shakkottai, ``Joint scheduling of {URLLC} and
  {eMBB} traffic in {5G} wireless networks,'' \emph{arXiv preprint
  arXiv:1712.05344}, 2017.

\bibitem{pedersen2016flexible}
K.~I. Pedersen, G.~Berardinelli, F.~Frederiksen, P.~Mogensen, and A.~Szufarska,
  ``A flexible {5G} frame structure design for frequency-division duplex
  cases,'' \emph{IEEE Communications Magazine}, vol.~54, no.~3, pp. 53--59,
  March 2016.

\bibitem{lien2017efficient}
S.-Y. Lien, S.-C. Hung, D.-J. Deng, and Y.~J. Wang, ``Efficient ultra-reliable
  and low latency communications and massive machine-type communications in
  {5G} new radio,'' in \emph{Proceeding of IEEE Global Communications
  Conference (GLOBECOM)}, pp.1--7, December 2017, Singapore, Singapore.

\bibitem{shafi20175g}
M.~Shafi, A.~F. Molisch, P.~J. Smith, T.~Haustein, P.~Zhu, P.~De~Silva,
  F.~Tufvesson, A.~Benjebbour, and G.~Wunder, ``{5G}: A tutorial overview of
  standards, trials, challenges, deployment, and practice,'' \emph{IEEE Journal
  on Selected Areas in Communications}, vol.~35, no.~6, pp. 1201--1221, April
  2017.

\bibitem{dawy2017toward}
Z.~Dawy, W.~Saad, A.~Ghosh, J.~G. Andrews, and E.~Yaacoub, ``Toward massive
  machine type cellular communications,'' \emph{IEEE Wireless Communications},
  vol.~24, no.~1, pp. 120--128, November 2017.

\bibitem{popovski2018wireless}
P.~Popovski, J.~J. Nielsen, C.~Stefanovic, E.~de~Carvalho, E.~Strom, K.~F.
  Trillingsgaard, A.~S. Bana, D.~M. Kim, R.~Kotaba, J.~Park \emph{et~al.},
  ``Wireless access for ultra-reliable low-latency communication: Principles
  and building blocks,'' \emph{IEEE Network}, vol.~32, no.~2, pp. 16--23, April
  2018.

\bibitem{zhang2017network}
H.~Zhang, N.~Liu, X.~Chu, K.~Long, A.-H. Aghvami, and V.~C. Leung, ``Network
  slicing based {5G} and future mobile networks: mobility, resource management,
  and challenges,'' \emph{IEEE Communications Magazine}, vol.~55, no.~8, pp.
  138--145, August 2017.

\bibitem{mahmood2016radio}
N.~H. Mahmood, M.~Lauridsen, G.~Berardinelli, D.~Catania, and P.~Mogensen,
  ``Radio resource management techniques for {eMBB} and {mMTC} services in {5G}
  dense small cell scenarios,'' in \emph{Proceedings of Vehicular Technology
  Conference (VTC-Fall)}, pp.1--5, September 2016, Montreal, QC, Canada.

\bibitem{tian2017uplink}
L.~Tian, C.~Yan, W.~Li, Z.~Yuan, W.~Cao, and Y.~Yuan, ``On uplink
  non-orthogonal multiple access for {5G}: opportunities and challenges,''
  \emph{China Communications}, vol.~14, no.~12, pp. 142--152, December 2017.

\bibitem{she2018cross}
C.~She, C.~Yang, and T.~Q. Quek, ``Cross-layer optimization for ultra-reliable
  and low-latency radio access networks,'' \emph{IEEE Transactions on Wireless
  Communications}, vol.~17, no.~1, pp. 127--141, October 2018.

\bibitem{aijaz2016towards}
A.~Aijaz, ``Towards {5G}-enabled tactile internet: Radio resource allocation
  for haptic communications,'' in \emph{Proceedings of Wireless Communications
  and Networking Conference (WCNC)}, pp.1--6, April 2016, Doha, Qatar.

\bibitem{she2017radio}
C.~She, C.~Yang, and T.~Q. Quek, ``Radio resource management for ultra-reliable
  and low-latency communications,'' \emph{IEEE Communications Magazine},
  vol.~55, no.~6, pp. 72--78, June 2017.

\bibitem{she2018joint}
------, ``Joint uplink and downlink resource configuration for ultra-reliable
  and low-latency communications,'' \emph{arXiv preprint arXiv:1801.00988},
  2018.

\bibitem{popovski20185g}
P.~Popovski, K.~F. Trillingsgaard, O.~Simeone, and G.~Durisi, ``{5G} wireless
  network slicing for {eMBB}, {URLLC}, and {mMTC}: A communication-theoretic
  view,'' \emph{arXiv preprint arXiv:1804.05057}, 2018.

\bibitem{matthiesen2018throughput}
B.~Matthiesen, O.~Aydin, and E.~A. Jorswieck, ``Throughput and energy-efficient
  network slicing,'' in \emph{WSA 2018; 22nd International ITG Workshop on
  Smart Antennas}.\hskip 1em plus 0.5em minus 0.4em\relax VDE, 2018, pp. 1--6.

\bibitem{wu2017signal}
Z.~Wu, F.~Zhao, and X.~Liu, ``Signal space diversity aided dynamic multiplexing
  for {eMBB} and {URLLC} traffics,'' in \emph{Proceedings of International
  Conference on Computer and Communications (ICCC)}, pp.1396--1400, March 2017,
  Chengdu, China.

\bibitem{iwabuchi20175g}
M.~Iwabuchi, A.~Benjebbour, Y.~Kishiyama, G.~Ren, C.~Tang, T.~Tian, L.~Gu,
  T.~Takada, and T.~Kashima, ``{5G} field experimental trials on {uRLLC} using
  new frame structure,'' in \emph{Globecom Workshops (GC Wkshps), 2017
  IEEE}.\hskip 1em plus 0.5em minus 0.4em\relax IEEE, 2017, pp. 1--6.

\bibitem{liao2016resource}
Q.~Liao, P.~Baracca, D.~Lopez-Perez, and L.~G. Giordano, ``Resource scheduling
  for mixed traffic types with scalable {TTI} in dynamic {TDD} systems,'' in
  \emph{Proceeding of Globecom Workshops}, pp.1--7, December 2016, Washington,
  DC, USA.

\bibitem{fountoulakis2017examination}
E.~Fountoulakis, N.~Pappas, Q.~Liao, V.~Suryaprakash, and D.~Yuan, ``An
  examination of the benefits of scalable {TTI} for heterogeneous traffic
  management in {5G networks},'' in \emph{Proceeding of International Symposium
  on Modeling and Optimization in Mobile, Ad Hoc, and Wireless Networks
  (WiOpt)}, pp.1--6, May 2017, Paris, France.

\bibitem{3GPP-5G}
G.~T.~. 912, ``{5G}; study on new radio access technology,'' \emph{TECHNICAL
  REPORT}, V14.1.0 (2017-10).

\bibitem{durisi2016toward}
G.~Durisi, T.~Koch, and P.~Popovski, ``Toward massive, ultrareliable, and
  low-latency wireless communication with short packets,'' \emph{Proceedings of
  the IEEE}, vol. 104, no.~9, pp. 1711--1726, 2016.

\bibitem{yang2014quasi}
W.~Yang, G.~Durisi, T.~Koch, and Y.~Polyanskiy, ``Quasi-static multiple-antenna
  fading channels at finite blocklength,'' \emph{IEEE Transactions on
  Information Theory}, vol.~60, no.~7, pp. 4232--4265, 2014.

\bibitem{makki2014finite}
B.~Makki, T.~Svensson, and M.~Zorzi, ``Finite block-length analysis of the
  incremental redundancy {HARQ},'' \emph{IEEE Wireless Communications Letters},
  vol.~3, no.~5, pp. 529--532, 2014.

\bibitem{boyd2004convex}
S.~Boyd and L.~Vandenberghe, \emph{Convex optimization}.\hskip 1em plus 0.5em
  minus 0.4em\relax Cambridge university press, 2004.

\bibitem{shen2004global}
P.~Shen and K.~Zhang, ``Global optimization of signomial geometric programming
  using linear relaxation,'' \emph{Applied Mathematics and Computation}, vol.
  150, no.~1, pp. 99--114, 2004.

\bibitem{chiang2007power}
M.~Chiang, C.~W. Tan, D.~P. Palomar, D.~O'neill, and D.~Julian, ``Power control
  by geometric programming,'' \emph{IEEE Transactions on Wireless
  Communications}, vol.~6, no.~7, pp. 2640--2651, July 2007.

\bibitem{marks1978general}
B.~R. Marks and G.~P. Wright, ``A general inner approximation algorithm for
  nonconvex mathematical programs,'' \emph{Operations research}, vol.~26,
  no.~4, pp. 681--683, 1978.

\bibitem{zhang2013monotonic}
Y.~J.~A. Zhang, L.~Qian, J.~Huang \emph{et~al.}, ``Monotonic optimization in
  communication and networking systems,'' \emph{Foundations and
  Trends{\textregistered} in Networking}, vol.~7, no.~1, pp. 1--75, 2013.

\bibitem{moltafet2018optimal}
M.~Moltafet, P.~Azmi, N.~Mokari, M.~R. Javan, and A.~Mokdad, ``Optimal and fair
  energy efficient resource allocation for energy
  harvesting-enabled-pd-noma-based hetnets,'' \emph{IEEE Transactions on
  Wireless Communications}, vol.~17, no.~3, pp. 2054--2067, 2018.

\end{thebibliography}
\end{document}